\definecolor{darkblue}{rgb}{0.0, 0.0, 0.55}
\tikzset{tail/.tip={Straight Barb[reversed, length=1.5pt]}}
\newcommand{\xright}[2][]{\tikz[baseline,anchor=base]{\node[inner sep=0pt,#1](a){$#2$};
\draw[tail-]([shift={(1pt,2pt)}]a.north west)--([shift={(-1pt,2pt)}]a.north east);}}
\newcommand{\xleft}[2][]{\tikz[baseline,anchor=base]{\node[inner sep=0pt](a){$#2$};
\draw[-tail]([shift={(1pt,2pt)}]a.north west)--([shift={(-1pt,2pt)}]a.north east);}}
\newcommand{\xboth}[2][]{\tikz[baseline,anchor=base]{\node[inner sep=0pt](a){$#2$};
\draw[tail-tail]([shift={(1pt,2pt)}]a.north west)--([shift={(-1pt,2pt)}]a.north east);}}
\newtheorem{thm}{Theorem}
\newtheorem{pro}{Proposition}
\newtheorem{lem}{Lemma}
\newtheorem{alg}{Algorithm}
\newtheorem{ass}{Assumption}
\newtheorem{corr}{Corollary}
\newtheorem{ex}{Example}
\newtheorem{defin}{Definition}
\newtheorem{ob}{Observation}
\newenvironment{proposition}{\begin{pro}}{\end{pro}}
\newenvironment{lemma}{\begin{lem}}{\end{lem}}
\newenvironment{definition}{\begin{defin}}{\end{defin}}
\newcommand{\be}{\begin{equation}}
\newcommand{\ee}{\end{equation}}
\begin{document}

\title{\sc	Influence and Connectivity in  Networks: A Generating Function Approach}
\author{Yang Sun\thanks{
CCBEF, Southwestern University of Finance and Economics, China. \textit{Email}: sunyang789987@gmail.com} \and Wei Zhao\thanks{School of Economics and Management, Tsinghua University, China. \textit{\ Email}: wei.zhao@outlook.fr } \and Junjie Zhou\thanks{
School of Economics and Management, Tsinghua University, China. \textit{Email}: zhoujj03001@gmail.com} }
\date{\today  }
\maketitle

\begin{abstract}


Many widely used network centralities are based on counting walks that meet specific criteria. This paper introduces a systematic framework for walk enumeration using generating functions. We introduce a first-passage decomposition that uniquely divides any walk passing through specified nodes or links into two components: a first-reaching walk and a subsequent walk. This decomposition yields a system of interconnected equations that relate three  disjoint categories of walks: unrestricted walks, walks that avoid specific elements, and walks that pass through designated sets. The framework offers a range of applications, including evaluating the effects of structural interventions—such as node or link modifications—on network walks, generalizing target centrality to multi-receiver scenarios in information networks, and comparing different strategies for adding links.

\end{abstract}

\newpage 

\section{Introduction}

Since \cite{Katz1953}, walk-based statistics have been adopted to describe nodes' positions within a network and network structure. As \cite{borgatti2005} emphasizes, walk-based statistics are suitable for measuring information flow and interpersonal influence within networks. Given the micro-foundation of Katz-Bonacich centrality provided by \cite{Ballester2006}, walk-based statistics are prevalent in characterizing strategic interaction and equilibrium behavior in network games. Many well-known walk-based statistics are based on counting a specific class of walks. The past literature derives these statistics case by case. This paper provides a rigorous and unified framework for walk-counting based on generating functions. 

Consider an enumeration question which counts the number of ways to add up two elements in two combinatorial class respectively, achieving a given size. To tackle this question, the generating function encodes size information of combinatorial class through formal power series. Two properties of this generating function are listed. The generating function of disjoint union is the sum while that of cartesian is the product, of the generating function of each combinatorial class. 

Given a network, a walk is a vector which encodes the nodes it traverses and the length of a walk counts the number of nodes it traverses. The concatenation operation is to merge two walks, where one walk's end meets the other's start. This operation is extended between two walk sets, which merges individual walks in each set. The concatenation (between two walk sets) can be uniquely decomposed if each merged walk is the concatenation of a unique pair of walks. Taking walk sets as enumeration classes with walks' length as size, the concatenation operation is equivalent to cartesian if and only if unique decomposition is achieved. 

To achieve unique decomposition, we extend the first passage decomposition prevalent in stochastic process. Fix a subset of nodes or links, a natural partition of a walk set is based on whether or not passing this subset of nodes or links. Additionally, the set of walks which must pass the subset of nodes (or links) is a concatenation with unique decomposition of two components (1) the set of walks which first reaches one of the nodes (first passes one of the links), and (2) the set of walks which start from the reached nodes. Accordingly, a linear relationship is established for the generating functions among these three components (including the original walk set). The generating function of one component can therefore represent the other two. Note that, whenever convergence is imposed, the generating function of a walk set is exactly the aggregate of walks in the set, geometrically discounted by their length. 

Several applications of these linear relationships are examined. First, fix a node, if taking the set of its neighbours or the set of links with its neighbours for the partition, we can therefore link the aggregate walks between a pair of nodes and the inverse of the Leontief matrix. Second, by taking an arbitrary single node for the partition, we can derive the aggregate of walks which does pass this node and therefore its inter-centrality (see \citealt{Ballester2006}). A direct extension is to take subset of nodes for partition, which then derives the group inter-centrality, proposed by \cite{Ballester2010}. Third, take an arbitrary link for the partition. On one hand, through taking the set of original walks as given, we can characterize the effect of link removal on aggregate walks and therefore identify the key link (see \citealt{Ballester2010}). On the other hand, through taking the set of walks which avoids the link as given, we can characterize the effect of link/bridge build up and therefore identify the optimal way (see \citealt{Konig2014}, \citealt{sun2023}). Further, we can explicitly derive the overall impact of general structural intervention, including adding and deleting multiple links at same time. Another application is drawn on information transmission. The diffusion centrality counts the expected reception of a given node (See \citealt{banerjee2013diffusion}). \cite{Bramoulle2018} additionally imposes intuitive non-retransmission assumptions, including the non-retransmission for the source, the target and both, and propose the target centrality. We first extend to group target centrality. Then we propose the intermediary centrality, to characterize the impact of a node's removal on information transmission. Finally, we compare link construction between two nodes $i$ and $j$ where their neighborhoods exhibit a nesting relationship ($N_i \supseteq N_j$). By uniquely decomposing the set of all walks that traverse newly constructed links exactly $n+1$ times into concatenations of walks passing through these links exactly once and walks passing through them exactly $n$ times, we establish an iterative proof. This approach demonstrates that forming connections between any external node set and node $i$ generates strictly more walks of any arbitrary length throughout the network than establishing equivalent connections to node $j$.

Our methodological innovation lies in applying generating functions to the decomposition of walks. The first passage decomposition, initially developed in stochastic processes to analyze threshold-crossing events and related phenomena (see \citealp{Norris2004}), is adapted here for unweighted networks. Generating functions have proven powerful across diverse fields including quantum field theory (\citealp{ZinnJustin2002}), statistical mechanics (\citealp{THOMPSON1972}), and biological population dynamics (\citealp{Kimmel2015}). In graph theory, walk generating functions have traditionally been employed to analyze how structural interventions affect the determinant of adjacency matrices (\citealp{Godsil1992}; \citealp{Rowlinson1996}). We advance this literature by generalizing the walk generating function from the set of all $i$-$j$ walks to arbitrary walk sets, and combining this with first passage decomposition to systematically analyze how specific nodes and links influence walks between node pairs.

Our work connects to walk-based centrality measures, which have long been fundamental tools for understanding network structure and influence. The eigenvector centrality (\citealp{Bonacich1972}) and its variant Katz-Bonacich centrality (\citealp{Katz1953}; \citealp{bonacich1987}) are among the earliest and most influential measures based on counting walks. Recently, \cite{Bloch2023} used node walk statistics to axiomatize various centrality measures, including eigenvector and Katz-Bonacich centrality. These measures have found wide applications across economic contexts, including production networks (\citealp{Acemoglu2012}; \citealp{baqaee2018cascading}; \citealp{liu2019industrial}), pricing of social products (\citealp{candogan2012optimal}; \citealp{Fainmesser2016}; \citealp{BLOCH2013243}), and multi-agent contracting (\citealp{Mayol2024}). A seminal contribution in this area is \cite{Ballester2006}, which provided a microfoundation for Katz-Bonacich centrality by demonstrating that players' equilibrium behavior in linear quadratic network games coincides with their Katz-Bonacich centrality. Subsequent research has extended this framework to multi-dimensional actions (\citealp{D2x}), games with congestion effects (\citealp{CURRARINI201740}), and games where payoffs depend on local-average efforts (\citealp{Ushchev2020}). The equilibria of these modified models are all characterized by variants of walk-based centralities. Recent developments include information diffusion centrality (\citealp{banerjee2013diffusion}; \citealp{Cruz2017politician}), community centrality (\citealp{Benzi2013}), and targeting centrality(\citealp{Bramoulle2018}), further demonstrating the fundamental importance of walk-counting in network analysis.

Building on these foundations, a related strand of literature examines how structural interventions affect centrality measures, providing insights for comparative statics and policy design. This includes analyses of key players (\citealp{Ballester2006}), key links (\citealp{Ballester2010}; \citealp{Zenou2014}), key groups (\citealp{Ballester2010}), key leaders (\citealp{Zhou2015}), and key bridges (\citealp{Golub2010}; \citealp{sun2023}). While these studies primarily employ matrix analysis to examine individual network elements' effects on Katz-Bonacich centrality, our paper provides a unified approach for analyzing arbitrary sets of nodes and links, thereby generalizing these classical results. Notably, \cite{Bramoulle2018} introduced target centrality to analyze information transmission between specific node pairs under non-retransmission constraints—constraints that can be viewed as structural interventions in information diffusion networks. We extend this framework to multiple information receivers and systematically identify key mediators in information diffusion processes.

The remainder of this paper is organized as follows. Section 2 introduces the preliminaries of generating functions. Section 3 presents our walk concatenation framework and main result. Section 4 demonstrates key applications of our methodology. Section 5 concludes the paper. All proofs are relegated to the Appendix.

\section{Preliminaries: Generating Functions}
\label{sec:preliminary}

Generating functions provide a powerful framework for counting discrete objects that can be finitely (or countably) described by construction rules. We begin by establishing the fundamental concepts.

A combinatorial class $\mathcal{A}$ is a denumerable set equipped with a size function $f_{\mathcal{A}}:\mathcal{A}\rightarrow \mathbb{Z}_{0}^{+}$ such that the inverse image of any integer is finite, i.e., $|f_{\mathcal{A}}^{-1}(t)|<\infty$ for any $t \geq 0$. For any non-negative integer $t$, $|f_{\mathcal{A}}^{-1}(t)|$ counts the number of objects in $\mathcal{A}$ that have size $t$. The notion of “size” represents a numerical measure assigned to each object, which varies depending on context—for combinations, it might be the number of chosen elements; for words, it might be the length.

Consider two classes $\mathcal{B}$ and $\mathcal{C}$. The \emph{Cartesian product} forms ordered pairs
\begin{equation*}
\mathcal{A}\equiv\mathcal{B}\times \mathcal{C}=\{\alpha =(\beta,\gamma) \mid \beta \in \mathcal{B}, \gamma \in \mathcal{C}\} 
\end{equation*}
with the size of a pair $\alpha =(\beta,\gamma)$ being defined by
\begin{equation*}
f_{\mathcal{A}}(\alpha) = f_{\mathcal{B}}(\beta) + f_{\mathcal{C}}(\gamma).
\end{equation*}
As a result, the number of objects in $\mathcal{A}=\mathcal{B}\times \mathcal{C}$ that have size $t$ is
\begin{equation}
|f_{\mathcal{A}}^{-1}(t)| = \sum_{k=0}^{t} |f_{\mathcal{B}}^{-1}(k)| \cdot |f_{\mathcal{C}}^{-1}(t-k)|.  \label{eq:size*}
\end{equation}

When $\mathcal{B}\cap \mathcal{C}=\emptyset$, the \emph{union} forms a combinatorial class $\mathcal{A}\equiv\mathcal{B}\cup \mathcal{C}$ with size defined by
\begin{equation*}
f_{\mathcal{A}}(\omega) = 
\begin{cases}
f_{\mathcal{B}}(\omega), & \text{if }\omega \in \mathcal{B} \\
f_{\mathcal{C}}(\omega), & \text{if }\omega \in \mathcal{C}
\end{cases}
\end{equation*}
As a result, the number of objects in $\mathcal{A}=\mathcal{B}\cup \mathcal{C}$ that have size $t$ is
\begin{equation}
|f_{\mathcal{A}}^{-1}(t)| = |f_{\mathcal{B}}^{-1}(t)| + |f_{\mathcal{C}}^{-1}(t)|.  \label{eq:size+}
\end{equation}

A generating function serves as a clothesline on which we display the counts of objects arranged by their sizes.
\begin{definition}
The generating function of a combinatorial class $\mathcal{A}$ is defined as: 
\begin{equation*}
G[[\mathcal{A};x]] = \sum_{t\geq 0} |f_{\mathcal{A}}^{-1}(t)| \cdot x^{t},
\end{equation*}
where $x$ is an indeterminate.
\end{definition}

In this formulation, the coefficient of $x^{t}$ directly corresponds to the number of objects in $\mathcal{A}$ having size $t$. We treat generating functions algebraically as formal power series, with $x$ serving as a symbolic placeholder, independent of convergence considerations.

The counting principles for size functions (Equations \eqref{eq:size*} and \eqref{eq:size+}) under the product and union constructions translate directly into corresponding properties for generating functions:

For combinatorial classes $\mathcal{B}$ and $\mathcal{C}$:
\begin{enumerate}[(i)]
\item \textbf{Product Property:} $G[[\mathcal{B}\times \mathcal{C};x]] = G[[\mathcal{B};x]] \cdot G[[\mathcal{C};x]]$;
\item \textbf{Sum Property:} $G[[\mathcal{B}\cup \mathcal{C};x]] = G[[\mathcal{B};x]] + G[[\mathcal{C};x]]$ provided $\mathcal{B}\cap \mathcal{C} = \emptyset$.
\end{enumerate}

These properties demonstrate that combinatorial operations correspond directly to algebraic operations on generating functions. The product of generating functions mirrors the combination of objects through Cartesian products, while the addition of generating functions reflects the union of disjoint sets. This correspondence transforms combinatorial problems into algebraic ones, making generating functions an essential tool in our subsequent analysis.
\section{Walks in Networks}
\label{sec:setup}

\subsection{Basic Framework}

Let $N=\{1,2,\ldots,n\}$ denote the set of nodes. We consider unweighted and directed networks, represented by an adjacency matrix $\mathbf{G}=(g_{ij})_{n\times n}$, where $g_{ij}\in \{0,1\}$ indicates the existence of a direct link from node $i$ to node $j$ for any $i,j\in N$. We assume no self-links, i.e., $g_{ii}=0$ for all $i\in N$. Let $\mathcal{G}=\{(i,j):g_{ij}=1\}$ denote the set of links.

A walk $w$ in network $\mathbf{G}$ is a sequence of nodes $w=(i_0,i_1,\ldots,i_K)$ such that $(i_k,i_{k+1})\in \mathcal{G}$ for each $k=0,\ldots,K-1$. The length of a walk, denoted by $\#(w)=K$, equals the number of links in it. For any node $i$, a trivial walk $(i)$ is a walk of length 0 starting from $i$.\footnote{Note that there are multiple length-zero (trivial) walks in the network, one for each node.} Trivial walks are distinct from the empty walk, denoted by $\emptyset$, whose length is undefined. Let $\mathcal{W}$ denote the set of all walks.

For any $i,j\in N$, we define the set of $i$-$j$ walks as 
\begin{equation*}
\mathcal{W}_{ij}:=\{w\in \mathcal{W}:w=(i_{0},i_{1},\ldots ,i_{K})\text{ where }i_{0}=i\text{ and }i_{K}=j\}.
\end{equation*}
Note that $\mathcal{W}_{ij}\cap \mathcal{W}_{kl}=\emptyset$ unless $i=k$ and $j=l$. Moreover, the length-zero walk $(i) \in \mathcal{W}_{ij}$ if and only if $j=i$. For notational convenience, we sometimes use $(i)$ to denote the singleton set containing the length-zero walk.

\begin{definition}[Walk Sets]
\label{def:walks} 
For any subset of nodes $A\subseteq N$ and nodes $i,j\in N$, we define: 
\begin{align*}
\mathcal{W}_{ij}(A) &:= \{w\in \mathcal{W}_{ij}: w=(i_0,i_1,\ldots,i_K) \text{ where } i_k\in A \text{ for some } k=1,\ldots,K-1\}; \\
\mathcal{W}_{ij}(\lnot A) &:= \{w\in \mathcal{W}_{ij}: w=(i_0,i_1,\ldots,i_K) \text{ where } i_k\notin A \text{ for all } k=1,\ldots,K-1\}.
\end{align*}

Moreover, for any subset of links $\mathcal{L}\subseteq \mathcal{G}$, we define: 
\begin{align*}
\mathcal{W}_{ij}(\mathcal{L}) &:=\{w\in \mathcal{W}_{ij}:w=(i_{0},i_{1},\ldots,i_{K})\text{ where }(i_{k},i_{k+1})\in \mathcal{L}\text{ for some }k=0,\ldots,K-1\}; \\
\mathcal{W}_{ij}(\lnot \mathcal{L}) &:=\{w\in \mathcal{W}_{ij}:w=(i_{0},i_{1},\ldots,i_{K})\text{ where }(i_{k},i_{k+1})\notin \mathcal{L}\text{ for all }k=0,\ldots,K-1\}.
\end{align*}
\end{definition}

The set $\mathcal{W}_{ij}(A)$ contains all $i$-$j$ walks that pass through at least one node in set $A$, excluding the starting node $i_{0}=i$ and terminating node $i_{K}=j$. In contrast, $\mathcal{W}_{ij}(\lnot A)$ represents walks that avoid all nodes in $A$, except possibly the starting node $i$ and terminating node $j$. Here, $A$ acts as an absorbing set that walks can enter but not leave. By construction, sets $\mathcal{W}_{ij}(A)$ and $\mathcal{W}_{ij}(\lnot A)$ are complementary for any $A\subseteq N$: 
\begin{equation}
\mathcal{W}_{ij}(A)=\mathcal{W}_{ij}\setminus \mathcal{W}_{ij}(\lnot A),\quad \forall i,j\in N,A\subseteq N.  \label{eq:partition1}
\end{equation}

The set $\mathcal{W}_{ij}(\lnot A)$ corresponds to the widely-studied first passage problem in stochastic processes (see recent survey by \citealp{Redner2023}). Consider when matrix $\mathbf{G}=(g_{ij})_{n\times n}$ is row-normalized to represent transition probabilities in a stochastic process, where $g_{ij}\geq 0$ denotes the probability of transition from state $i$ to $j$. In this context, $\mathcal{W}_{ij}(\lnot A)$ captures all possible walks from $i$ to $j$ that terminate immediately upon reaching any state in set $A$. This concept is fundamental in analyzing stopping times—for instance, in financial markets where trading strategies execute when a stock price first hits a threshold value. The threshold price levels form the absorbing set $A$, and $\mathcal{W}_{ij}(\lnot A)$ represents all possible price trajectories from level $i$ to $j$ that haven't yet triggered the threshold condition. Following the same spirit of first passage analysis, we apply this idea to unweighted networks, studying walks that terminate upon encountering nodes in set $A$.

Length-zero walks require special attention. For any node $j$ and set $A$, we have $(i) \notin \mathcal{W}_{ij}(A)$, while $(i) \in \mathcal{W}_{ij}(\lnot A)$ if and only if $i=j$. The behavior of the length-zero walk $(i)$ illustrates the subtle distinction between $\mathcal{W}_{ii}(\lnot N)$ and $\mathcal{W}_{ii}(\emptyset)$. Specifically, $(i) \in \mathcal{W}_{ii}(\lnot N)$ since $(i)$ doesn't traverse any network nodes, whereas $(i) \notin \mathcal{W}_{ii}(\emptyset)$ by our definition. For the case where $A=N_i$ with $N_i=\{j:g_{ij}=1\}$ being $i$'s out-neighbors, we have 
\begin{equation}
\mathcal{W}_{ij}(\lnot N_i) = 
\begin{cases}
\{(i)\}, & \text{when } j=i \\ 
\{(i,j)\}, & \text{when } j\in N_i \\ 
\emptyset, & \text{otherwise}
\end{cases}
; \quad \mathcal{W}_{ij}(N_i) = 
\begin{cases}
\mathcal{W}_{ij}\setminus \{(i)\}, & \text{when } j=i \\ 
\mathcal{W}_{ij}\setminus \{(i,j)\}, & \text{when } j\in N_i \\ 
\mathcal{W}_{ij}, & \text{otherwise}
\end{cases}
.  \label{eq:neighbors}
\end{equation}

For a set of links $\mathcal{L}\subseteq\mathcal{G}$, $\mathcal{W}_{ij}(\mathcal{L})$ comprises walks from $i$ to $j$ that traverse at least one link in $\mathcal{L}$, while $\mathcal{W}_{ij}(\lnot \mathcal{L})$ consists of walks that avoid all links in $\mathcal{L}$. Since length-zero walks utilize no links, $(i)\notin \mathcal{W}_{ij}(\mathcal{L})$ for any $j$ and $\mathcal{L}$, while $(i)\in \mathcal{W}_{ij}(\lnot \mathcal{L})$ if and only if $i=j$. Analogous to the node case, the behavior of the length-zero walk $(i)$ illustrates the subtle distinction between $\mathcal{W}_{ii}(\lnot \mathcal{G})$ and $\mathcal{W}_{ii}(\emptyset)$: specifically, $(i)\in \mathcal{W}_{ii}(\lnot \mathcal{G})$ since $(i)$ employs no links, whereas $(i)\notin \mathcal{W}_{ii}(\emptyset)$ by definition. As with node-based walk sets, these link-defined walk sets are complementary: 
\begin{equation}
\mathcal{W}_{ij}(\mathcal{L})=\mathcal{W}_{ij}\setminus \mathcal{W}_{ij}(\lnot \mathcal{L}), \quad \forall i,j\in N,\mathcal{L}\subseteq \mathcal{G}.  \label{eq:partition2}
\end{equation}
For the case where $\mathcal{L}_{i}=\left\{ \left( i,k\right) :k\in N_{i}\right\}$ is the set of $i$'s out-links, we have 
\begin{equation}
\mathcal{W}_{ij}(\lnot \mathcal{L}_{i})=%
\begin{cases}
\{(i)\}, & \text{if }j=i \\ 
\emptyset , & \text{otherwise}%
\end{cases}%
;\quad \mathcal{W}_{ij}(\mathcal{L}_{i})=%
\begin{cases}
\mathcal{W}_{ij}\setminus \{(i)\}, & \text{if }j=i \\ 
\mathcal{W}_{ij}, & \text{otherwise}%
\end{cases}%
.  \label{eq:links}
\end{equation}%
This reflects that the only walk not traversing any out-link from its initial node is the length-zero walk at that node.

\subsection{First Passage Decomposition}

We define a non-commutative product operation on walks called concatenation.

\begin{definition} ~
\begin{enumerate}
\item[(i)] For two walks $w=(i_{0},i_{1},\ldots,i_{K})$ and $w^{\prime}=(j_{0},j_{1},\ldots,j_{T})$, their concatenation is defined as 
\begin{equation*}
w\odot w^{\prime}:= 
\begin{cases}
(i_{0},i_{1},\ldots,i_{K},j_{1},\ldots,j_{T}), & \text{if }i_{K}=j_{0}; \\ 
\emptyset, & \text{otherwise.}
\end{cases}
\end{equation*}

\item[(ii)] For walk sets $\mathcal{W}_{1},\mathcal{W}_{2}\subseteq \mathcal{W}$, their concatenation is defined as: 
\begin{equation*}
\mathcal{W}_{1}\odot \mathcal{W}_{2}:=\{w:w=w_{1}\odot w_{2}\text{ for some }w_{1}\in \mathcal{W}_{1}\text{ and }w_{2}\in \mathcal{W}_{2}\}.
\end{equation*}

\item[(iii)] Suppose $\mathcal{W}^{\prime}=\mathcal{W}_{1}\odot \mathcal{W}_{2}$. Then, $\mathcal{W}^{\prime}$ is said to be uniquely decomposed as the concatenation of $\mathcal{W}_{1}$ and $\mathcal{W}_{2}$ if $w_{1}\odot w_{2}=w_{1}^{\prime}\odot w_{2}^{\prime}$ implies $w_{1}=w_{1}^{\prime}$, $w_{2}=w_{2}^{\prime}$, and $w_{1}\odot w_{2}\neq \emptyset$ for all non-empty walks $w_{1},w_{1}^{\prime}\in \mathcal{W}_{1}$, $w_{2},w_{2}^{\prime}\in \mathcal{W}_{2}$.
\end{enumerate}
\end{definition}

When two walks are concatenated with matching endpoints ($i_{K}=j_{0}$), their lengths add up, yielding $\#(w\odot w^{\prime})=\#(w)+\#(w^{\prime})$. The empty walk $\emptyset$ acts as an absorbing element: $w\odot \emptyset=\emptyset\odot w=\emptyset$ for all walks $w\in \mathcal{W}$. Length-zero walks serve as identity elements: $w\odot (i_{K})=(i_{0})\odot w=w$ for any walk $w$.

The concatenation of two sets of walks is defined as the set of all possible concatenations.\footnote{The set of nodes $N$ and walks $\mathcal{W}$, equipped with concatenation, form a small category where objects are nodes $N$, morphisms are walks $\mathcal{W}$, and morphism composition is concatenation, satisfying identity and associativity properties.} This definition extends naturally to concatenating multiple sets of walks. When concatenating walk sets, a crucial distinction emerges between unique and non-unique decompositions. Formally, unique decomposition fails when a walk can be expressed as the concatenation of two different pairs of walks—that is, when $w_{1}\odot w_{2} = w_{1}^{\prime}\odot w_{2}^{\prime}$ where $w_{1} \neq w_{1}^{\prime}$ and $w_{2} \neq w_{2}^{\prime}$. 

The concatenation $\mathcal{W}_{ij}\odot \mathcal{W}_{jj}$ provides a clear example of non-unique decomposition. While this equals $\mathcal{W}_{ij}$, walks in $\mathcal{W}_{ij}$ that pass through node $j$ multiple times can be formed in multiple ways. Consider a simple hub-spoke network where node 1 serves as the central hub connected to two peripheral nodes 2 and 3 as illustrated by figure \ref{fg:non-unique}. Examine the walk $w_{13} = (1,3,1,2,1,3)$ from node 1 to node 3 (dash arrows in the figure). This walk can be decomposed as the concatenation of $(1,3)\in \mathcal{W}_{13}$ and $(3,1,2,1,3)\in \mathcal{W}_{33}$, or alternatively as $(1,3,1,2,1,3)\in \mathcal{W}_{13}$ and the trivial walk $(3)\in \mathcal{W}_{33}$. Thus, even though $\mathcal{W}_{13}=\mathcal{W}_{13}\odot \mathcal{W}_{33}$, the set $\mathcal{W}_{13}$ cannot be uniquely decomposed as the concatenation of these walk sets.

\begin{figure}[htp]
\centering
\includegraphics[scale=0.8]{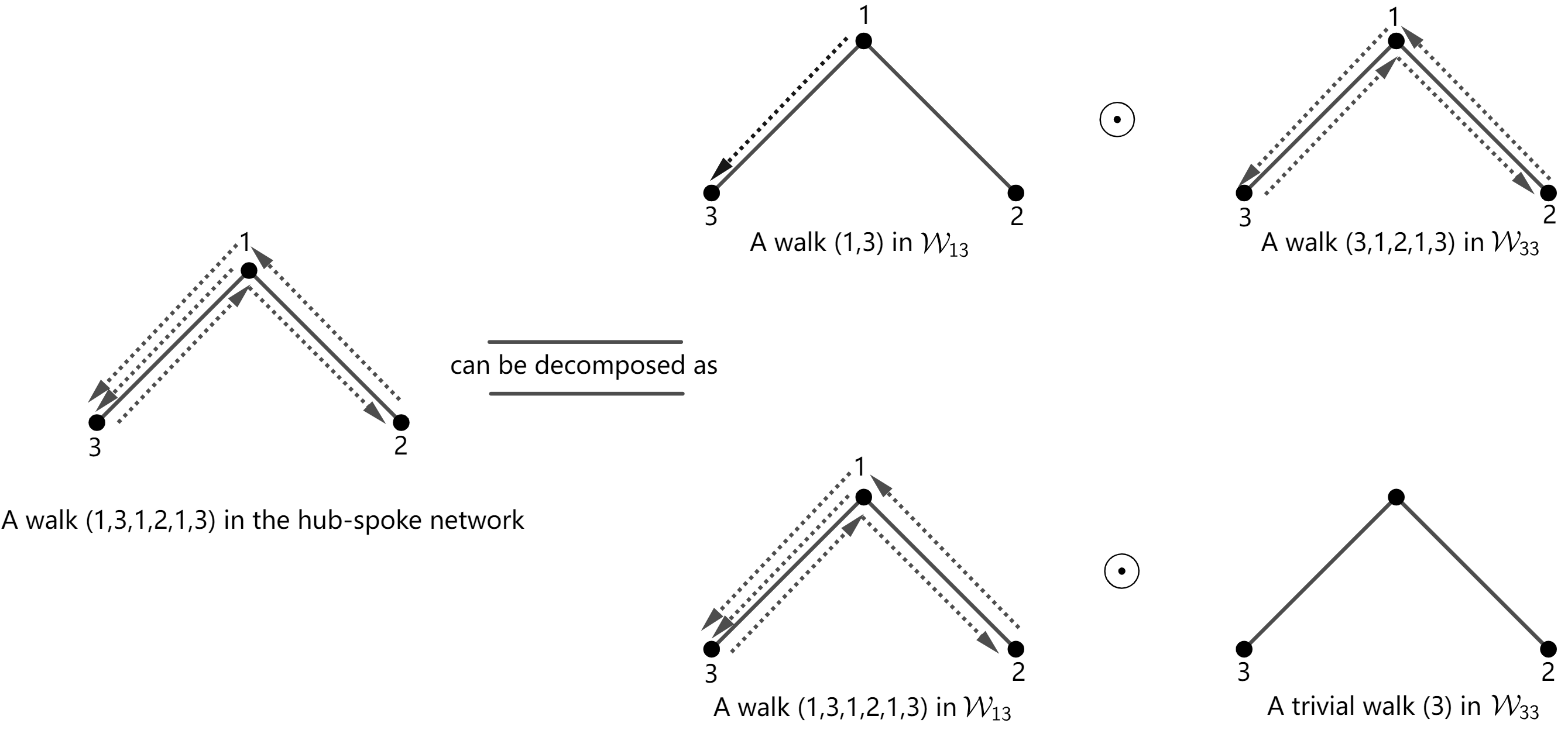}
\caption{Non-unique decomposition}
\label{fg:non-unique}
\end{figure}

Unique decomposition is achieved when each walk in the concatenated set can be formed in exactly one way and the concatenation of any pair of walks is non-empty, establishing a one-to-one correspondence between walks in $\mathcal{W}_{1}\odot \mathcal{W}_{2}$ and pairs of walks from the Cartesian product $\mathcal{W}_{1}\times \mathcal{W}_{2}$. This property ensures that we can enumerate walks in the concatenated set by analyzing the component sets separately and forms the foundation for deriving relationships among generating functions. We now present a systematic approach to uniquely decompose walks based on their traversal of specified nodes or links.

\begin{lemma}[First Passage Decomposition]
\label{pro:walk decomposition} 
For any nodes $i,j\in N$, any set of nodes $A\subseteq N$, and any set of links $\mathcal{L}\subseteq \mathcal{G}$: 
\begin{eqnarray}
\mathcal{W}_{ij}(A) &=&\bigcup_{l\in A}\{\mathcal{W}_{il}(\lnot A)\setminus \{(i)\}\odot \mathcal{W}_{lj}\setminus \{(j)\}\}; \label{eq:FPD1} \\
\mathcal{W}_{ij}(\mathcal{L}) &=&\bigcup_{(l,k)\in \mathcal{L}}\{\mathcal{W}_{il}(\lnot \mathcal{L})\odot (l,k)\odot \mathcal{W}_{kj}\}. \label{eq:FPD2}
\end{eqnarray}
Moreover, these decompositions are unique.
\end{lemma}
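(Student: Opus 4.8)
The plan is to establish each identity by double inclusion and then verify the unique-decomposition criterion from part (iii) of the concatenation definition. The organizing principle is a \emph{first-passage} argument: every walk in $\mathcal{W}_{ij}(A)$ possesses a canonical ``first interior visit to $A$,'' and every walk in $\mathcal{W}_{ij}(\mathcal{L})$ a canonical ``first traversal of a link in $\mathcal{L}$,'' and in each case this canonical event supplies the unique split point. I would prove both parts in parallel, treating the node case \eqref{eq:FPD1} in full and then indicating the (cleaner) modifications for the link case \eqref{eq:FPD2}.

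For the $\supseteq$ direction of \eqref{eq:FPD1}, given $w_1\in\mathcal{W}_{il}(\lnot A)\setminus\{(i)\}$ and $w_2\in\mathcal{W}_{lj}\setminus\{(j)\}$, the endpoints match at $l$, so $w_1\odot w_2$ is a non-empty $i$-$j$ walk; since $\#(w_1)\ge 1$ and $\#(w_2)\ge 1$, the shared node $l\in A$ occupies an interior position, placing the walk in $\mathcal{W}_{ij}(A)$. For $\subseteq$, take $w=(i_0,\dots,i_K)\in\mathcal{W}_{ij}(A)$, set $m=\min\{k\in\{1,\dots,K-1\}:i_k\in A\}$ and $l=i_m$; the prefix $(i_0,\dots,i_m)$ avoids $A$ in its interior by minimality and has length $m\ge 1$, hence lies in $\mathcal{W}_{il}(\lnot A)\setminus\{(i)\}$, while the suffix $(i_m,\dots,i_K)$ has length $K-m\ge 1$ and lies in $\mathcal{W}_{lj}\setminus\{(j)\}$. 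The two subtractions are precisely what force both factors to have positive length, which is what guarantees that $l$ is interior rather than an endpoint; I would flag this as the place most likely to go wrong if the boundary cases $i=l$, $l=j$, or $i=j$ are not reconciled with the length-zero conventions recorded in \eqref{eq:neighbors}.

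The argument for \eqref{eq:FPD2} is structurally identical but simpler: the distinguished link $(l,k)$ already has length $1$ and so plays the role of the mandatory positive-length anchor. Consequently nothing must be excised from $\mathcal{W}_{il}(\lnot\mathcal{L})$ or $\mathcal{W}_{kj}$, and trivial prefixes or suffixes are legitimately permitted, arising exactly when $i=l$ or $k=j$. The canonical split is the smallest index $p$ with $(i_p,i_{p+1})\in\mathcal{L}$, yielding $l=i_p$ and $k=i_{p+1}$, and the two inclusions run as above with ``first interior visit to $A$'' replaced by ``first traversal of a link in $\mathcal{L}$.''

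Uniqueness is the real content and the step I expect to be the main obstacle. For both identities the strategy is to show that, in any admissible factorization, the length of the first factor is forced to equal the position of the canonical first-passage event, a quantity depending only on $w$. In \eqref{eq:FPD1} the membership $w_1\in\mathcal{W}_{il}(\lnot A)$ means every interior node of $w_1$ avoids $A$ while its terminal node $l$ lies in $A$, so $\#(w_1)$ must be the first interior time at which $w$ meets $A$; because $\#(w_2)\ge 1$ this time is genuinely interior, so $\#(w_1)$, and with it the triple $(w_1,l,w_2)$, is pinned down. The link case is the same with the first $A$-visit replaced by the first $\mathcal{L}$-traversal. The crux is verifying that these first-passage positions are well-defined and consistent with the $\lnot A$/$\lnot\mathcal{L}$ constraints while keeping the length-zero bookkeeping coherent; once that is settled, both the disjointness of the union over $l\in A$ (resp.\ over $(l,k)\in\mathcal{L}$) and the one-to-one correspondence with the Cartesian product follow at once, since the canonical position determines $l$ (resp.\ $(l,k)$) as well as the split.
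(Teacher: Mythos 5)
Your proposal is correct and follows essentially the same route as the paper's proof: double inclusion via the split at the first interior visit to $A$ (resp.\ the first traversal of a link in $\mathcal{L}$), with uniqueness obtained by showing the split point is forced to coincide with that canonical first-passage position — exactly the paper's $t^{*}=\min\{t:(i_t,i_{t+1})\in\mathcal{L}\}$ argument in the link case. The only cosmetic difference is in the node case, where the paper argues uniqueness by contradiction (a longer first factor would factor as $w_{il}=w'_{il}\odot w_{ll}$, forcing an interior visit to $l\in A$), while you directly identify $\#(w_1)$ with the first interior $A$-visit time, which also yields the disjointness of the union over $l\in A$ that the paper defers to the proof of Theorem~\ref{pro:GF}.
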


Lemma \ref{pro:walk decomposition} decomposes walks based on their first passage through either nodes in set $A$ or links in set $\mathcal{L}$. For walks in $\mathcal{W}_{ij}(A)$, let $l\in A$ be the first node in set $A$ encountered. The walk uniquely decomposes into two components: a walk from $i$ to $l$ that avoids set $A$, followed by a walk from $l$ to $j$. Since $\mathcal{W}_{ij}(A)$ contains no length-zero walks and all its walks have length at least 2, we exclude length-zero walks from the concatenation factors. For walks in $\mathcal{W}_{ij}(\mathcal{L})$, a parallel three-part decomposition exists: a walk to some link in $\mathcal{L}$ using no links from $\mathcal{L}$, the link itself, and a walk from the link's endpoint to the destination.

Lemma \ref{pro:walk decomposition} shares conceptual foundations with the classical first passage decomposition in stochastic processes, though they serve different analytical purposes. Using our notation, the decomposition can be expressed as 
\begin{equation}
\mathcal{W}_{ij}=\mathcal{W}_{ij}(\lnot \{j\})\odot \mathcal{W}_{jj}\text{ when }j\neq i\text{,}  \label{eq:FPD}
\end{equation}
which characterizes all walks from $i$ to $j$ by their first arrival at $j$, enabling the study of first-passage times in applications such as threshold-crossing events (see \citealp{Norris2004}). Our decomposition instead characterizes walks that must traverse specified nodes (or links), structuring them by the first encounter with these required elements. Both
decompositions leverage the principle of partitioning walks at critical first encounters.

\begin{ex}
Consider the case where $A=N_{i}$. By equation \eqref{eq:neighbors}, $\mathcal{W}_{ij}(N_{i})=\mathcal{W}_{ij}$ when $j\notin N_{i}\cup \{i\}$. Since $\mathcal{W}_{il}(\lnot N_{i})=\{(i,l)\}$ for $l\in N_{i}$, equation \eqref{eq:FPD1} yields
\begin{equation}
\mathcal{W}_{ij}=\bigcup_{l\in N_{i}}\{(i,l)\odot \mathcal{W}_{lj}\}\text{.}  \label{eq:example}
\end{equation}
This provides a unique decomposition of $i$-$j$ walks: each walk consists of a link from $i$ to some neighbor $l$ followed by a walk from $l$ to $j$.

When $\mathcal{L}=\mathcal{L}_{i}$ is the set of $i$'s out-links and $j\neq i$, equation \eqref{eq:FPD2} yields
\begin{equation*}
\mathcal{W}_{ij}=\bigcup_{(i,l)\in \mathcal{L}_{i}}\{(i)\odot (i,l)\odot \mathcal{W}_{lj}\}=\bigcup_{(i,l)\in \mathcal{L}_{i}}\{(i,l)\odot \mathcal{W}_{lj}\}
\end{equation*}
where we use equation \eqref{eq:links}: $\mathcal{W}_{ij}(\lnot \mathcal{L}_{i})=\emptyset$ when $j\neq i$ and $\mathcal{W}_{ij}(\mathcal{L}_{i})=\mathcal{W}_{ij}$. Thus, each $i$-$j$ walk uniquely decomposes into an out-link $(i,l)$ followed by a walk from $l$ to $j$.
\end{ex}

We now introduce generating functions as an analytical framework to encode walk enumeration in networks. These functions translate our walk decomposition results into algebraic relationships, enabling the simultaneous analysis of walks across all lengths.

Any set of walks $\mathcal{W^{\prime}}\subseteq \mathcal{W}$ forms a combinatorial class with a natural size function $f_{\mathcal{W^{\prime}}}(w)=\#(w)$ measuring walk length. This leads to the following definition:

\begin{defin}
For any $\mathcal{W^{\prime}}\subseteq \mathcal{W}$, the generating function is a formal power series 
\begin{equation*}
M[[\mathcal{W^{\prime}};x]]:=\sum_{t\geq 0}|f^{-1}_{\mathcal{W^{\prime}}}(t)|\cdot x^{t}.
\end{equation*}
\end{defin}

In this generating function, each term $x^{t}$ corresponds to walks of length $t$ in the set $\mathcal{W'}$, with coefficient $|f^{-1}_{\mathcal{W'}}(t)|$ counting the number of such walks. When $\mathcal{W'}$ contains only the empty walk, $M[[\mathcal{W'};x]]=1$. The function $M[[\mathcal{W}_{ij};x]]$, known as the walk generating function, encodes all walks between nodes $i$ and $j$ in the network \citep{Godsil1992}. While traditionally used to analyze how structural interventions affect the adjacency matrix determinant \citep{Rowlinson1996}, we extend this concept to enumerate arbitrary subsets of walks, enabling analysis of how specific nodes and links contribute to connectivity between node pairs.

Applying the sum and product properties from Section \ref{sec:preliminary}, we establish the following two equalities:

\begin{enumerate}[(i)]
\item \textbf{Sum Property:} $M[[\mathcal{W}_{1}\cup \mathcal{W}_{2};x]]=M[[\mathcal{W}_{1};x]]+M[[\mathcal{W}_{2};x]]$ provided that $\mathcal{W}_{1}$ and $\mathcal{W}_{2}$ are disjoint;

\item \textbf{Product Property:} $M[[\mathcal{W}_{1}\odot \mathcal{W}_{2};x]]=M[[\mathcal{W}_{1};x]]\cdot M[[\mathcal{W}_{2};x]]$ provided that the decomposition $\mathcal{W'}=\mathcal{W}_{1}\odot \mathcal{W}_{2}$ is unique.
\end{enumerate}

The sum property follows directly from the additive nature of generating functions applied to disjoint walk sets. The product property establishes the relationship between walk concatenation and multiplication of generating functions. The uniqueness ensures a bijection between walks in the concatenated set $\mathcal{W}_{1}\odot \mathcal{W}_{2}$ and pairs in the Cartesian product $\mathcal{W}_{1}\times \mathcal{W}_{2}$. Since the length of a concatenated walk equals the sum of its components' lengths, this bijection transforms the concatenation into multiplication of generating functions.

Applying the product property to the classical first passage decomposition \eqref{eq:FPD} yields:
\begin{equation}
M[[\mathcal{W}_{ij};x]]=M[[\mathcal{W}_{ij}(\lnot\{j\});x]]\cdot M[[\mathcal{W}_{jj};x]].  \label{eq:FPDM}
\end{equation}
This aligns with \citeauthor{Zhou2015}'s (\citeyear{Zhou2015}) technical lemma (equation 12) showing that $M[[\mathcal{W}_{ij};x]]\cdot M^{-1}[[\mathcal{W}_{jj};x]]$ counts the walks from $i$ to $j$ that visit $j$ only as their terminal node.\footnote{Note that $M[[\mathcal{W}_{jj};x]]$ is invertible in the ring of formal power series as its constant term is one and therefore non-zero.  The inverse of  $M[[\mathcal{W}_{jj};x]]$ is denoted by $M^{-1}[[\mathcal{W}_{jj};x]]$, which satisfies $M[[\mathcal{W}_{jj};x]]\cdot M^{-1}[[\mathcal{W}_{jj};x]]=1$ in the ring of formal power series.}

In the following part, we fix the indeterminate $x$ and simplify notation
by writing $m_{ij},m_{ij}^{\lnot A}$ and $m_{ij}^{\lnot \mathcal{L}}$ in
place of $M[[\mathcal{W}_{ij};x]]$, $M[[\mathcal{W}_{ij}(\lnot A);x]]$ and $M[[\mathcal{W}_{ij}(\lnot \mathcal{L});x]]$.

We now present the first passage decomposition of walks in terms of generating functions. Let $\mathbf{1}_{l=i}$ denote the indicator function that takes value 1 if $l=i$ and 0 otherwise.

\begin{thm}
\label{pro:GF} For any nodes $i,j\in N$:

(i) Given any set of nodes $A\subseteq N$ 
\begin{align}
m_{ij}& =m_{ij}^{\lnot A}+m_{ij}^{A},  \label{eq:GF1} \\
m_{ij}^{A}& =\sum_{l\in A}(m_{il}^{\lnot A}-\mathbf{1}_{l=i})(m_{lj}-\mathbf{1}_{l=j}).  \label{eq:GF2}
\end{align}

(ii) Given any set of links $\mathcal{L}\subseteq \mathcal{G}$ 
\begin{align}
m_{ij}& =m_{ij}^{\lnot \mathcal{L}}+m_{ij}^{\mathcal{L}},  \label{eq:GF3} \\
m_{ij}^{\mathcal{L}}& =\sum_{(l,k)\in \mathcal{L}}m_{il}^{\lnot \mathcal{L}}\cdot m_{kj}\cdot x.  \label{eq:GF4}
\end{align}
\end{thm}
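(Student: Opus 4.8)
The plan is to convert the set-level identities already in hand—the complementary partitions \eqref{eq:partition1} and \eqref{eq:partition2}, together with the first-passage decompositions \eqref{eq:FPD1} and \eqref{eq:FPD2} of Lemma \ref{pro:walk decomposition}—into the stated algebraic identities by feeding them through the Sum and Product Properties of $M[[\,\cdot\,;x]]$. Equations \eqref{eq:GF1} and \eqref{eq:GF3} are immediate: the partitions exhibit $\mathcal{W}_{ij}$ as the disjoint union $\mathcal{W}_{ij}(\lnot A)\cup\mathcal{W}_{ij}(A)$ (respectively $\mathcal{W}_{ij}(\lnot\mathcal{L})\cup\mathcal{W}_{ij}(\mathcal{L})$), so a single application of the Sum Property delivers the additive relation.

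For \eqref{eq:GF2} I would begin with \eqref{eq:FPD1}. The first step is to check that the union over $l\in A$ is disjoint: every walk in $\mathcal{W}_{ij}(A)$ has a uniquely determined \emph{first} node of $A$ that it enters, and a walk lying in the $l$-th term necessarily has $l$ as that first node, since its left factor ends at $l$ with interior avoiding $A$. Hence each walk sits in exactly one term, and the Sum Property applies across $l$. Within each term, Lemma \ref{pro:walk decomposition} guarantees the decomposition is unique, so the Product Property factors the generating function as $M[[\mathcal{W}_{il}(\lnot A)\setminus\{(i)\}]]\cdot M[[\mathcal{W}_{lj}\setminus\{(j)\}]]$.

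The delicate step is evaluating these two factors in terms of $m_{il}^{\lnot A}$ and $m_{lj}$ while correctly discounting the excised length-zero walks. I would invoke the earlier characterizations that $(i)\in\mathcal{W}_{il}(\lnot A)$ exactly when $l=i$ and $(j)\in\mathcal{W}_{lj}$ exactly when $l=j$. Deleting a length-zero walk that is present removes its contribution $x^{0}=1$ from the generating function, whereas deleting an absent one changes nothing; this yields $M[[\mathcal{W}_{il}(\lnot A)\setminus\{(i)\}]]=m_{il}^{\lnot A}-\mathbf{1}_{l=i}$ and $M[[\mathcal{W}_{lj}\setminus\{(j)\}]]=m_{lj}-\mathbf{1}_{l=j}$. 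Summing the resulting products over $l\in A$ gives \eqref{eq:GF2}. This bookkeeping, which is precisely where the indicators $\mathbf{1}_{l=i}$ and $\mathbf{1}_{l=j}$ enter, is the part I expect to require the most care.

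The derivation of \eqref{eq:GF4} runs in parallel from \eqref{eq:FPD2}. Disjointness of the union over $(l,k)\in\mathcal{L}$ follows because each walk in $\mathcal{W}_{ij}(\mathcal{L})$ has a unique first link belonging to $\mathcal{L}$, so the Sum Property applies; the uniqueness of each threefold decomposition in Lemma \ref{pro:walk decomposition} then lets the Product Property factor every term as $M[[\mathcal{W}_{il}(\lnot\mathcal{L})]]\cdot M[[\{(l,k)\}]]\cdot M[[\mathcal{W}_{kj}]]$. The only new point is that the singleton $\{(l,k)\}$ consists of a single length-one walk, so its generating function equals $x$; this produces the explicit factor $x$ in \eqref{eq:GF4}, and no length-zero corrections arise because no trivial walks are removed in this decomposition.
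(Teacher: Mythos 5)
Your proposal is correct and follows essentially the same route as the paper's own proof: both derive \eqref{eq:GF1} and \eqref{eq:GF3} from the partitions \eqref{eq:partition1}--\eqref{eq:partition2} via the Sum Property, and both obtain \eqref{eq:GF2} and \eqref{eq:GF4} by applying the Sum Property across the disjoint first-passage terms of Lemma \ref{pro:walk decomposition}, the Product Property within each uniquely decomposed term, and the same indicator corrections $\mathbf{1}_{l=i}$, $\mathbf{1}_{l=j}$ for excised length-zero walks (with $M[[\{(l,k)\};x]]=x$ supplying the factor $x$ in the link case). Your disjointness argument via the uniquely determined first node in $A$ (respectively first link in $\mathcal{L}$) is the same observation the paper makes, merely phrased from the walk's side rather than the factors' side.
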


Theorem \ref{pro:GF} derives directly from applying generating functions to the decomposition of walks. The walk partition identities \eqref{eq:partition1} and \eqref{eq:partition2} translate to equations \eqref{eq:GF1} and \eqref{eq:GF3} through the additivity property of generating functions. Similarly, the first passage decomposition through node set $A$ or link set $\mathcal{L}$ becomes equations \eqref{eq:GF2} and \eqref{eq:GF4}. This algebraic translation converts a structural decomposition into a system of equations that can be solved explicitly.

\begin{ex}[Continued]
Setting $A=N_{i}$ and applying equation \eqref{eq:neighbors} (or setting $\mathcal{L}=\{(i_{0},i_{1})\in \mathcal{G}: \text{either }i_{0}=i \text{ or }i_{1}=i\}$ and applying equation \eqref{eq:links}), the generating function yields for any nodes $i,j\in N$:
\begin{equation}
m_{ij}=\mathbf{1}_{j=i}+\sum_{l\in N_{i}}m_{lj}\cdot x=\left[(\mathbf{I}-x\mathbf{G})^{-1}\right]_{ij}.  \label{eq:KB2}
\end{equation}

This recursion shows that the generating function for $i$-$j$ walks equals $\mathbf{1}_{j=i}$ plus the sum of generating functions for $l$-$j$ walks over $i$'s neighbors $l$, each weighted by $x$. Equation \eqref{eq:KB2} expresses $m_{ij}$ as the $(i,j)$-entry of $(\mathbf{I}-x\mathbf{G})^{-1}$, which expands as the formal power series $\sum_{k=0}^{\infty}x^{k}\mathbf{G}^{k}$ in $x$ with matrix coefficients.

This establishes a direct connection to Katz-Bonacich centrality, where each node's importance depends on its neighbors' importance. When $x\in(0,\frac{1}{\lambda_{\max}(\frac{\mathbf{G}+\mathbf{G}^{\prime}}{2})})$, where $\lambda_{\max}(\frac{\mathbf{G}+\mathbf{G}^{\prime}}{2})$ is the spectral radius of the symmetrized adjacency matrix, $\sum_{j}m_{ij}$ converges to node $i$'s Katz-Bonacich centrality.
\end{ex}

Theorem \ref{pro:GF} transforms the structural decomposition into a system of equations that can be solved explicitly. It provides a complete characterization of the relationships between generating functions $m_{ij}$, $m_{ij}^{A}$, and $m_{ij}^{\lnot A}$ (and analogously for link sets $\mathcal{L}$). Each function encodes the number of walks from $i$ to $j$ of
various lengths under different restrictions. These relationships form a system of two equations with three unknowns, allowing us to determine the other two once given one of the three. In practice, if we assume $m_{ij}$ is known and express $m_{ij}^{A}$ (or $m_{ij}^{\mathcal{L}}$) in terms of $m_{ij}$ by solving the resulting linear system, we capture the impacts of removing a set of nodes (or links) on the $i$-$j$ walks, as the term $m_{ij}^{A}$ represents walks that must pass through nodes in $A$. Conversely, if we assume $m_{ij}^{\lnot \mathcal{L}}$ is known and express $m_{ij}$ as a function of $m_{ij}^{\lnot \mathcal{L}}$, we capture the effects of adding links $\mathcal{L}$, since $m_{ij}^{\lnot \mathcal{L}}$ represents walks in the network without links $\mathcal{L}$ while $m_{ij}$
represents walks in the full network. In summary, the framework provided by Theorem \ref{pro:GF} enables analysis of how the choice of sets $A$ and $\mathcal{L}$ affects walks, which has economic applications as we demonstrate in the next section.

\section{Applications}
\label{sec:app}

Building on our formal power series framework, we now analyze how structural changes redistribute walks throughout the network. This analysis has immediate implications for policy design in economic networks, where targeted interventions can yield significant welfare improvements. Before proceeding, we establish several notational conventions to streamline the presentation.

Let $\mathbf{M}=\left(m_{ij}\right)$ denote the matrix of generating functions whose elements are formal power series. Similarly, define matrices $\mathbf{M}^{A}=(m_{ij}^{A})$, $\mathbf{M}^{\lnot A}=(m_{ij}^{\lnot A})$, $\mathbf{M}^{\mathcal{L}}=(m_{ij}^{\mathcal{L}})$, and $\mathbf{M}^{\lnot \mathcal{L}}=(m_{ij}^{\lnot \mathcal{L}})$ for any node set $A$ and link set $\mathcal{L}$. We use $\mathbf{M}^{-1}$ to denote the inverse of generating function matrix $\mathbf{M}$ such that $\mathbf{M}^{-1}\mathbf{M=I}$.\footnote{
The generating function matrix $\mathbf{M}\left[\left[x\right]\right]$ is invertible in the ring of formal power series matrices because its constant term, $\mathbf{M} = \mathbf{I}$, is invertible.} For notational convenience with submatrices, given an $n\times n$ matrix $\mathbf{Q}$ and a set $A\subseteq N$, we use $\mathbf{Q}_{AA}$ to denote the submatrix corresponding to rows and columns indexed by $A$. After appropriate node reordering, we can represent $\mathbf{Q}$ as the block matrix 
\begin{equation*}
\left[ 
\begin{array}{cc}
\mathbf{Q}_{A^{C}A^{C}} & \mathbf{Q}_{A^{C}A} \\ 
\mathbf{Q}_{AA^{C}} & \mathbf{Q}_{AA}%
\end{array}%
\right] \text{,}
\end{equation*}%
where $A^{C}$ denotes the complement of set $A$.

\subsection{Structural Interventions}

\noindent\textbf{Nodes Removal.} The concept of key node in networks was first introduced by \cite{Ballester2006} in their study of criminal networks, where they sought to identify the individual whose removal would most effectively reduce criminal activities. Mathematically, this translates to finding the node whose removal maximally decreases the aggregate Katz-Bonacich centrality. 

In this spirit, we first analyze how removing a set of nodes $A$ affects walks between the remaining nodes in the network. These residual walks are characterized by $\mathbf{M}_{A^{C}A^{C}}^{\lnot A}$, which represents walks between node pairs in $A^{C}$ that never pass through nodes in $A$. The following corollary applies Theorem \ref{pro:GF} to express $\mathbf{M}_{A^{C}A^{C}}^{\lnot A}$ in terms of walks in the original network. 

Let $\dot{b}_{i}=\sum_{j\in N}m_{ij}$ and $\mathring{b}_{i}=\sum_{j\in N}m_{ji}$ denote the outgoing and incoming generating functions for node $i$, respectively. In undirected networks, these two generating functions are identical. When these series converge, they yield $i$'s outgoing and incoming Katz-Bonacich centrality in the network. We denote the corresponding vectors as $\mathbf{\dot{b}}=(\dot{b}_{i})$ and $\mathbf{\mathring{b}}=(\mathring{b}_{i})$.
\begin{pro}
\label{cor:nodes} For any node set $A\subseteq N$, we have 
\begin{equation}
\mathbf{M}_{A^{C}A^{C}}^{\lnot A}=\mathbf{M}_{A^{C}A^{C}}-\mathbf{M}_{A^{C}A}(\mathbf{M}_{AA})^{-1}\mathbf{M}_{AA^{C}}\text{.}  \label{eq:nodes1}
\end{equation}
In particular, the decrease in the number of walks caused by removing nodes $A$ is:
\begin{equation}
\sum_{i,j\in N}m_{ij}-\sum_{i,j\in A^{C}}m_{ij}^{\lnot A}=\mathbf{\mathring{b}}_{A}^{\prime }(\mathbf{M}_{AA})^{-1}\mathbf{\dot{b}}_{A}  \label{eq:nodes2}
\end{equation}
\end{pro}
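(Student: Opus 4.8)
The plan is to read Theorem~\ref{pro:GF} off blockwise, sorting the index pairs $(i,j)$ by where $i$ and $j$ sit relative to $A$, because the indicator terms $\mathbf{1}_{l=i}$ and $\mathbf{1}_{l=j}$ in \eqref{eq:GF2} behave differently across the blocks. Throughout I use that $\mathbf{M}_{AA}$ is invertible in the ring of formal power series matrices: it is a principal submatrix of $\mathbf{M}$, whose constant term is $\mathbf{I}$, so the constant term of $\mathbf{M}_{AA}$ is the identity on $A$. The first step is to specialize \eqref{eq:GF1}--\eqref{eq:GF2} to $i,j\in A^{C}$. For such a pair every $l\in A$ has $l\neq i$ and $l\neq j$, so both indicators vanish and the sum over $l\in A$ collapses to $\sum_{l\in A}m_{il}^{\lnot A}m_{lj}$. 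Collecting these scalar identities into the $A^{C}\times A^{C}$ block gives $\mathbf{M}_{A^{C}A^{C}}=\mathbf{M}_{A^{C}A^{C}}^{\lnot A}+\mathbf{M}_{A^{C}A}^{\lnot A}\mathbf{M}_{AA^{C}}$, which isolates $\mathbf{M}_{A^{C}A^{C}}^{\lnot A}$ up to the still-unknown block $\mathbf{M}_{A^{C}A}^{\lnot A}$.

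The second step pins down that block by specializing instead to $i\in A^{C}$, $j\in A$. Here $\mathbf{1}_{l=i}=0$ for all $l\in A$, but the summand $l=j$ now contributes $-m_{ij}^{\lnot A}$; this term cancels exactly the leading $m_{ij}^{\lnot A}$ carried by \eqref{eq:GF1}, leaving $m_{ij}=\sum_{l\in A}m_{il}^{\lnot A}m_{lj}$, i.e. $\mathbf{M}_{A^{C}A}=\mathbf{M}_{A^{C}A}^{\lnot A}\mathbf{M}_{AA}$. Right-multiplying by $(\mathbf{M}_{AA})^{-1}$ gives $\mathbf{M}_{A^{C}A}^{\lnot A}=\mathbf{M}_{A^{C}A}(\mathbf{M}_{AA})^{-1}$, and substituting this into the relation from the first step yields \eqref{eq:nodes1}. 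This sign cancellation is the one genuinely delicate point: dropping the $\mathbf{1}_{l=j}$ term would leave a spurious $m_{ij}^{\lnot A}$ and wreck the clean factorization $\mathbf{M}_{A^{C}A}=\mathbf{M}_{A^{C}A}^{\lnot A}\mathbf{M}_{AA}$, so keeping careful track of the endpoint indicators is the crux.

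For \eqref{eq:nodes2} I would write both totals as quadratic forms in all-ones vectors. The left side is $\mathbf{1}_{N}^{\prime}\mathbf{M}\mathbf{1}_{N}-\mathbf{1}_{A^{C}}^{\prime}\mathbf{M}_{A^{C}A^{C}}^{\lnot A}\mathbf{1}_{A^{C}}$; inserting \eqref{eq:nodes1} rewrites the subtracted term as $\mathbf{1}_{A^{C}}^{\prime}\mathbf{M}_{A^{C}A^{C}}\mathbf{1}_{A^{C}}-\mathbf{1}_{A^{C}}^{\prime}\mathbf{M}_{A^{C}A}(\mathbf{M}_{AA})^{-1}\mathbf{M}_{AA^{C}}\mathbf{1}_{A^{C}}$, so the difference equals $\sum_{i,j}m_{ij}$ over all pairs meeting $A$ plus the remainder $\mathbf{1}_{A^{C}}^{\prime}\mathbf{M}_{A^{C}A}(\mathbf{M}_{AA})^{-1}\mathbf{M}_{AA^{C}}\mathbf{1}_{A^{C}}$. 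It remains to match this with the target, which I do by expanding $\mathring{\mathbf{b}}_{A}^{\prime}=\mathbf{1}_{A}^{\prime}\mathbf{M}_{AA}+\mathbf{1}_{A^{C}}^{\prime}\mathbf{M}_{A^{C}A}$ and $\dot{\mathbf{b}}_{A}=\mathbf{M}_{AA}\mathbf{1}_{A}+\mathbf{M}_{AA^{C}}\mathbf{1}_{A^{C}}$ (these are just the block row/column sums of the full-network totals $\mathring{\mathbf{b}}=\mathbf{M}^{\prime}\mathbf{1}_{N}$, $\dot{\mathbf{b}}=\mathbf{M}\mathbf{1}_{N}$). Using $\mathbf{1}_{A}^{\prime}\mathbf{M}_{AA}(\mathbf{M}_{AA})^{-1}=\mathbf{1}_{A}^{\prime}$ and $(\mathbf{M}_{AA})^{-1}\mathbf{M}_{AA}\mathbf{1}_{A}=\mathbf{1}_{A}$, the product $\mathring{\mathbf{b}}_{A}^{\prime}(\mathbf{M}_{AA})^{-1}\dot{\mathbf{b}}_{A}$ splits into four pieces equal to $\sum_{i,j\in A}m_{ij}$, $\sum_{i\in A,j\in A^{C}}m_{ij}$, $\sum_{i\in A^{C},j\in A}m_{ij}$, and the same quadratic remainder, which together reproduce the decomposition just obtained. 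Apart from this bookkeeping, everything is formal-power-series linear algebra, legitimate precisely because $\mathbf{M}_{AA}$ is invertible.
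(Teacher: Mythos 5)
Your proposal is correct and follows essentially the same route as the paper: specialize Theorem \ref{pro:GF} blockwise, extract the cross-block factorization $\mathbf{M}_{A^{C}A}^{\lnot A}=\mathbf{M}_{A^{C}A}(\mathbf{M}_{AA})^{-1}$ (the paper reaches the analogous identity via the $i\in A$, $j\notin A$ block and the classical first-passage relation \eqref{eq:FPDM}, while you get it directly from the $i\in A^{C}$, $j\in A$ block through the $\mathbf{1}_{l=j}$ cancellation --- the same identity, reached slightly more directly), substitute into the $A^{C}A^{C}$ block to obtain \eqref{eq:nodes1}, and then expand $\mathbf{\mathring{b}}_{A}^{\prime}(\mathbf{M}_{AA})^{-1}\mathbf{\dot{b}}_{A}$ into the same four quadratic-form pieces the paper aggregates for \eqref{eq:nodes2}. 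No gaps; your explicit tracking of the endpoint indicators and of the invertibility of $\mathbf{M}_{AA}$ in the formal power series ring matches the paper's argument.
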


Equation (\ref{eq:nodes1}) emerges directly from the matrix system in Theorem \ref{pro:GF}(i): 
\begin{eqnarray*}
\mathbf{M}_{A^{C}A^{C}} &=&\mathbf{M}_{A^{C}A^{C}}^{\lnot A}+\mathbf{M}_{A^{C}A^{C}}^{A}; \\
\mathbf{M}_{A^{C}A^{C}}^{A} &=&\mathbf{M}_{A^{C}A}^{\lnot A}\mathbf{M}_{AA^{C}}\text{,}
\end{eqnarray*}
with the substitution of $\mathbf{M}_{A^{C}A}^{\lnot A}=\mathbf{M}_{A^{C}A}\left( \mathbf{M}_{AA}\right) ^{-1}$ (from equation \ref{eq:FPDM}). It characterizes the impact on walks between nodes in $A^{C}$ when removing the set $A$ of nodes. In particular, taking $A=\{i\}$ as a singleton, then for any $j\neq i\neq k$, equation (\ref{eq:nodes1}) yields:
\begin{equation*}
m_{jk}-m_{jk}^{\lnot \{i\}}=m_{ji}\cdot m_{ii}^{-1}\cdot m_{ik}\text{.}
\end{equation*}
This equation characterizes how node $i$ influences $j$-$k$ walks of various lengths. It generalizes Lemma 1 of \cite{Ballester2006} from convergent values to formal power series, providing a tool for analyzing node $i$'s role as a walk intermediator.

From equation (\ref{eq:nodes1}), we know that the reduction in walks between nodes in $A^{C}$ is given by $\mathbf{M}_{A^{C}A}(\mathbf{M}_{AA})^{-1}\mathbf{M}_{AA^{C}}$. The total reduction in walks across the network can be expressed as: 
\begin{equation*}
\sum_{i,j\in N}m_{ij}-\sum_{i,j\in A^{C}}m_{ij}^{\lnot A}=\underbrace{\mathbf{1}^{\prime }\mathbf{M}_{A^{C}A}(\mathbf{M}_{AA})^{-1}\mathbf{M}_{AA^{C}}\mathbf{1}}_{\text{reduction in walks between $A^{C}$ nodes}}+\underbrace{\mathbf{1}^{\prime }\mathbf{M}_{A^{C}A}\mathbf{1}}_{\text{walks to removed nodes}}+\underbrace{\mathbf{1}^{\prime }\mathbf{M}_{AA}\mathbf{1}+\mathbf{1}^{\prime }\mathbf{M}_{AA^{C}}\mathbf{1}}_{\text{walks from removed nodes}}\text{.}
\end{equation*}
Simplifying these terms yields equation (\ref{eq:nodes2}). The converged value of $\sum_{i,j\in N}m_{ij}-\sum_{i,j\in A^{C}}m_{ij}^{\lnot A}$ is called the group intercentrality of $A$ introduced by \cite{Ballester2010} and further explored by \cite{sun2023}.\footnote{\cite{sun2023} derive the same expression for group intercentrality as equation (\ref{eq:nodes2}) through a different method. Specifically, \cite{sun2023} derive the expression by the equivalent translation between removing nodes and decreasing nodes' characteristics in a network game. Here, we derive the expression by analyzing the change in walks through generating functions. This approach also offers a walk-based explanation of group intercentrality, which is absent in \cite{sun2023}.} It measures the impact of removing group $A$ from a criminal network on overall delinquency in \cite{Ballester2010}. Equation (\ref{eq:nodes2}) generalizes \cite{Ballester2006}'s intercentrality measure to multi-node sets and directed networks. This characterization reveals that a node set's importance depends on three crucial factors: the walks ending at this set ($\mathbf{\mathring{b}}_{A}$), the walks starting from this set ($\mathbf{\dot{b}}_{A}$)—which are identical in undirected networks—and the internal connection patterns ($\mathbf{M}_{AA}$) within the set.

The term $(\mathbf{M}_{AA})^{-1}$ in the group intercentrality expression (\ref{eq:nodes2}) stems from the generating function of the classical first passage decomposition $\mathbf{M}_{A^{C}A}^{\lnot A}=\mathbf{M}_{A^{C}A}\left( \mathbf{M}_{AA}\right) ^{-1}$. The stronger the internal connections within a group, the less impact its removal has on walks starting from the remaining nodes in $A^{C}$. This principle becomes clear in regular undirected networks, where all nodes have identical Katz-Bonacich centrality. In such networks, for two node sets $A$ and $A^{\prime}$ of equal size, the decrease in the total number of walks caused by removing $A$ is larger than that caused by removing $A^{\prime}$ whenever $\mathbf{M}_{AA}\leq \mathbf{M}_{A^{\prime}A^{\prime}}$ element-wise. Thus, less internally connected groups often exert greater network-wide influence. This formula also explains the failure of greedy algorithms that sequentially select the player maximizing individual intercentrality at each step, as noted by \cite{Ballester2010}. Such greedy approaches ignore the crucial impact of internal group connections on the network's walk structure.

\bigskip

\noindent\textbf{Link modifications.} Rewriting Theorem \ref{pro:GF}(ii) in matrix form yields:
\begin{eqnarray*}
\mathbf{M} &=&\mathbf{M}^{\lnot \mathcal{L}}+\mathbf{M}^{\mathcal{L}}; \\
\mathbf{M}^{\mathcal{L}} &=&x\mathbf{M}^{\lnot \mathcal{L}}\mathbf{LM},
\end{eqnarray*}
where $\mathbf{L}$ is an $n\times n$ matrix with $ij$-th element being $1$ if $(i,j) \in \mathcal{L}$ and $0$ otherwise. Treating $\mathbf{M}$ as known and solving this equation system, we obtain:
\begin{equation}
\mathbf{M}^{\lnot \mathcal{L}}=\mathbf{M}(\mathbf{I}+x\mathbf{LM})^{-1}\text{.}  \label{eq:links1}
\end{equation}

Equation (\ref{eq:links1}) characterizes walks that never traverse links in $\mathcal{L}$ in terms of the generating function matrix $\mathbf{M}$. Equivalently, it represents walks in the network after removing links $\mathcal{L}$. Conversely, treating $\mathbf{M}^{\lnot \mathcal{L}}$ as known and solving the equation system, we obtain:
\begin{equation}
\mathbf{M}=(\mathbf{I}-x\mathbf{M}^{\lnot \mathcal{L}}\mathbf{L})^{-1}\mathbf{M}^{\lnot \mathcal{L}}\text{.}  \label{eq:links2}
\end{equation}

Equation (\ref{eq:links2}) characterizes walks in the full network in terms of walks that never traverse $\mathcal{L}$. This formulation captures the impact of adding links $\mathcal{L}$, where $\mathbf{M}^{\lnot \mathcal{L}}$ represents walks in the original network and $\mathbf{M}$ represents walks after adding links $\mathcal{L}$. By applying equations (\ref{eq:links1}) and (\ref{eq:links2}) iteratively, we can analyze the impact of complex link modifications involving both additions and removals.

Formally, consider a structural intervention that removes links in set $\mathcal{L}_{-}\subseteq \mathcal{G}$ and adds links $\mathcal{L}_{+}\subseteq \mathcal{G}^C$ (where $\mathcal{G}^C$ denotes the complement of $\mathcal{G}$). Let $\mathbf{L}_{-}$ and $\mathbf{L}_{+}$ denote the corresponding matrices such that the $ij$-th element equals $1$ in $\mathbf{L}_{-}$ whenever $(i,j) \in \mathcal{L}_{-}$ and equals $1$ in $\mathbf{L}_{+}$ whenever $(i,j) \in \mathcal{L}_{+}$.

\begin{pro}
\label{cor:link} The generating function matrix for walks in the post-intervention network is: 
\begin{equation*}
\mathbf{M}(\mathcal{L}_{-},\mathcal{L}_{+})=\underbrace{\mathbf{M}}_{\text{original walks}}+\underbrace{x\mathbf{M}(\mathbf{L}_{+}-\mathbf{L}_{-})(\mathbf{I}-x(\mathbf{L}_{+}-\mathbf{L}_{-})\mathbf{M})^{-1}\mathbf{M}}_{:=\Delta \mathbf{M}(\mathcal{L}_{-},\mathcal{L}_{+})\text{ change in walks}}\text{.}
\end{equation*}
\end{pro}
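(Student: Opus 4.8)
The plan is to reduce the claim to a single resolvent identity by invoking the closed form of the walk generating matrix in equation \eqref{eq:KB2}, which states that for any network $\mathbf{M}=(\mathbf{I}-x\mathbf{G})^{-1}$, equivalently $\mathbf{M}^{-1}=\mathbf{I}-x\mathbf{G}$. Since $\mathcal{L}_{-}\subseteq\mathcal{G}$ consists of existing links and $\mathcal{L}_{+}\subseteq\mathcal{G}^{C}$ of absent ones, the post-intervention adjacency matrix is the valid $0$--$1$ matrix $\mathbf{G}'=\mathbf{G}+\mathbf{L}_{+}-\mathbf{L}_{-}$, so applying the same equation to the new network gives $\mathbf{M}(\mathcal{L}_{-},\mathcal{L}_{+})=(\mathbf{I}-x\mathbf{G}')^{-1}$. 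Writing $\mathbf{D}=\mathbf{L}_{+}-\mathbf{L}_{-}$ and substituting $\mathbf{I}-x\mathbf{G}=\mathbf{M}^{-1}$, the entire statement collapses to showing that $(\mathbf{M}^{-1}-x\mathbf{D})^{-1}$ equals $\mathbf{M}+\Delta\mathbf{M}$.

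I would then carry out the algebra. Factor $\mathbf{M}^{-1}-x\mathbf{D}=\mathbf{M}^{-1}(\mathbf{I}-x\mathbf{M}\mathbf{D})$; because the constant term of $\mathbf{I}-x\mathbf{M}\mathbf{D}$ is $\mathbf{I}$, it is invertible in the ring of formal power series, and $(\mathbf{M}^{-1}-x\mathbf{D})^{-1}=(\mathbf{I}-x\mathbf{M}\mathbf{D})^{-1}\mathbf{M}$. Applying the elementary resolvent splitting $(\mathbf{I}-\mathbf{Q})^{-1}=\mathbf{I}+\mathbf{Q}(\mathbf{I}-\mathbf{Q})^{-1}$ with $\mathbf{Q}=x\mathbf{M}\mathbf{D}$ peels off the leading $\mathbf{M}$ and produces $\mathbf{M}+x\mathbf{M}\mathbf{D}(\mathbf{I}-x\mathbf{M}\mathbf{D})^{-1}\mathbf{M}$, i.e. the desired split into original walks plus the change $\Delta\mathbf{M}$. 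As an independent cross-check I would re-derive the same object sequentially from the paper's own tools: equation \eqref{eq:links1} turns the removal of $\mathcal{L}_{-}$ into the intermediate matrix $\mathbf{M}(\mathbf{I}+x\mathbf{L}_{-}\mathbf{M})^{-1}$, after which equation \eqref{eq:links2} adds $\mathcal{L}_{+}$ (which is absent from the intermediate network, so that intermediate matrix plays the role of $\mathbf{M}^{\lnot\mathcal{L}_{+}}$ for the final network); both routes must coincide with $(\mathbf{I}-x\mathbf{G}')^{-1}$.

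The main obstacle is bookkeeping of operator order, since $\mathbf{M}$ and $\mathbf{D}$ do not commute. The resolvent that emerges naturally is $(\mathbf{I}-x\mathbf{M}\mathbf{D})^{-1}$, whereas the statement is phrased with $(\mathbf{I}-x\mathbf{D}\mathbf{M})^{-1}$; passing between the two requires the push-through identity $(\mathbf{I}-x\mathbf{M}\mathbf{D})^{-1}\mathbf{M}=\mathbf{M}(\mathbf{I}-x\mathbf{D}\mathbf{M})^{-1}$, and one must check that this move relocates the trailing factor $\mathbf{M}$ consistently, since it converts $x\mathbf{M}\mathbf{D}(\mathbf{I}-x\mathbf{M}\mathbf{D})^{-1}\mathbf{M}$ into $x\mathbf{M}\mathbf{D}\mathbf{M}(\mathbf{I}-x\mathbf{D}\mathbf{M})^{-1}$. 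I would pin down the correct interior factor by matching low-order coefficients in $x$: at order $x$ the expression is unambiguously $x\mathbf{M}\mathbf{D}\mathbf{M}$, while the order-$x^{2}$ term distinguishes $\mathbf{M}\mathbf{D}\mathbf{M}\mathbf{D}\mathbf{M}$ from $\mathbf{M}\mathbf{D}\mathbf{D}\mathbf{M}\mathbf{M}$, confirming that the resolvent form must be recorded with the ordering that reproduces $(\mathbf{I}-x\mathbf{G}')^{-1}$.
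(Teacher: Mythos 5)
Your proof is correct and reaches the decisive identity $(\mathbf{M}^{-1}-x(\mathbf{L}_{+}-\mathbf{L}_{-}))^{-1}$ by a genuinely different route than the paper. You invoke the closed form \eqref{eq:KB2} for both the original network and the post-intervention network $\mathbf{G}'=\mathbf{G}+\mathbf{L}_{+}-\mathbf{L}_{-}$ (legitimate, since $\mathcal{L}_{-}\subseteq\mathcal{G}$ and $\mathcal{L}_{+}\subseteq\mathcal{G}^{C}$ make $\mathbf{G}'$ a valid $0$--$1$ adjacency matrix), reducing everything to resolvent algebra in the ring of formal power series. The paper never appeals to \eqref{eq:KB2}: it stays inside the first-passage system of Theorem \ref{pro:GF}(ii), first solving $\mathbf{M}(\mathcal{L}_{-},\emptyset)=\mathbf{M}(\mathbf{I}+x\mathbf{L}_{-}\mathbf{M})^{-1}$ via \eqref{eq:links1}, then applying \eqref{eq:links2} with the intermediate network in the role of $\mathbf{M}^{\lnot\mathcal{L}_{+}}$, and collapsing the composition to the same resolvent---which is exactly the sequential cross-check you sketch. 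Your main route is shorter but presupposes the closed form; the paper's buys a self-contained derivation from the walk-decomposition machinery, which matters for the paper's narrative that all results flow from Theorem \ref{pro:GF}.

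Your worry about operator ordering is not pedantry---it exposes a transcription error in the statement itself. Writing $\mathbf{D}=\mathbf{L}_{+}-\mathbf{L}_{-}$, the correct expansion is $(\mathbf{M}^{-1}-x\mathbf{D})^{-1}=\mathbf{M}+x\mathbf{M}\mathbf{D}(\mathbf{I}-x\mathbf{M}\mathbf{D})^{-1}\mathbf{M}$, equivalently $\mathbf{M}+x\mathbf{M}\mathbf{D}\mathbf{M}(\mathbf{I}-x\mathbf{D}\mathbf{M})^{-1}$ by push-through, whereas the proposition (and the final line of the paper's appendix proof, which carries the same slip) records $\mathbf{M}+x\mathbf{M}\mathbf{D}(\mathbf{I}-x\mathbf{D}\mathbf{M})^{-1}\mathbf{M}$. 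At order $x^{2}$ the printed form gives $\mathbf{M}\mathbf{D}^{2}\mathbf{M}^{2}$ instead of the correct $\mathbf{M}\mathbf{D}\mathbf{M}\mathbf{D}\mathbf{M}$; concretely, for a single added directed link one has $\mathbf{D}^{2}=\mathbf{0}$, so the printed formula would wrongly truncate $\Delta\mathbf{M}(\mathcal{L}_{-},\mathcal{L}_{+})$ at walks crossing the new link once. The walk-counting expansion the paper displays immediately below the proposition ($x\mathbf{M}\mathbf{D}\mathbf{M}+x^{2}\mathbf{M}\mathbf{D}\mathbf{M}\mathbf{D}\mathbf{M}+\cdots$) agrees with your corrected ordering, so your low-order coefficient matching settles the issue the right way: the interior resolvent must be $(\mathbf{I}-x\mathbf{M}\mathbf{D})^{-1}$ when flanked by $\mathbf{M}\mathbf{D}$ and $\mathbf{M}$.
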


Proposition \ref{cor:link} characterizes the impact of structural interventions on walks between all node pairs. The change in walks has a clear walk-counting interpretation:
\begin{align*}
\Delta \mathbf{M}(\mathcal{L}_{-},\mathcal{L}_{+}) = &\underbrace{x\mathbf{M}(\mathbf{L}_{+}-\mathbf{L}_{-})\mathbf{M}}_{\text{passing the changed links once}} + \underbrace{x^{2}\mathbf{M}(\mathbf{L}_{+}-\mathbf{L}_{-})\mathbf{M}(\mathbf{L}_{+}-\mathbf{L}_{-})\mathbf{M}}_{\text{passing the changed links twice}} \\
&+ \underbrace{\mathbf{...}}_{\text{passing the changed links multiple times}}
\end{align*}

When considering the change in the total number of walks in the network:
\begin{equation}
\mathbf{1}^{\prime }\Delta \mathbf{M}(\mathcal{L}_{-},\mathcal{L}_{+})\mathbf{1}=x\mathbf{\mathring{b}}^{\prime }(\mathbf{L}_{+}-\mathbf{L}_{-})(\mathbf{I}-x(\mathbf{L}_{+}-\mathbf{L}_{-})\mathbf{M})^{-1}\mathbf{\dot{b}}\text{,}  \label{eq:keylink2}
\end{equation}
which is consistent with Proposition 1 in \cite{sun2023} (with different notation). In particular, when $x\in (0,\frac{1}{\lambda _{\max }(\frac{\mathbf{G}+\mathbf{G}^{\prime }}{2})})$ and considering convergent values of the generating functions, the link set $\mathcal{L}^{\ast}$ that maximizes $\mathbf{1}^{\prime }\Delta \mathbf{M}(\mathcal{L},\emptyset )\mathbf{1}$ identifies the key links whose removal would result in the largest decrease in aggregate Katz-Bonacich centrality. This index ranks critical channels in criminal networks, where targeted removals would maximally disrupt criminal activities.

The dimension of $(\mathbf{L}_{+}-\mathbf{L}_{-})(\mathbf{I}-x(\mathbf{L}_{+}-\mathbf{L}_{-})\mathbf{M})^{-1}$ is determined by the rank of $\mathbf{L}_{+}-\mathbf{L}_{-}$. When few nodes are involved in link modifications—such as single link addition or deletion—the rank of $\mathbf{L}_{+}-\mathbf{L}_{-}$ is low, making the calculation of aggregate walk change $\mathbf{1}^{\prime }\Delta \mathbf{M}(\mathcal{L}_{-},\mathcal{L}_{+})\mathbf{1}$ computationally efficient. For example, adding a single undirected link $(i,j)$ to an undirected network results in a rank-two matrix $\mathbf{L}_{+}=\mathbf{E}_{ij}$ where only the $(i,j)$-th and $(j,i)$-th elements are 1 and all others are 0. The change in total walks in the network is given by:
\begin{equation*}
\mathbf{1}^{\prime }\Delta \mathbf{M}(\emptyset ,\{(i,j)\})\mathbf{1}=x\mathbf{\mathring{b}}_{\{i,j\}}^{\prime }\mathbf{E}_{ij}(\mathbf{I}+x\mathbf{E}_{ij}\mathbf{M}_{\{i,j\}\{i,j\}})^{-1}\mathbf{\dot{b}}_{\{i,j\}}\text{,}
\end{equation*}
which can be computed by calculating the inverse of the rank-two matrix $(\mathbf{I}+x\mathbf{E}_{ij}\mathbf{M}_{\{i,j\}\{i,j\}})$. Similar analysis applies to single link $(k,l)$ deletion. Consequently, the changes in the total number of walks caused by adding a single link $(i,j)$ and deleting a single link $(k,l)$ are given by: 
\begin{eqnarray*}
\mathbf{1}^{\prime }\Delta \mathbf{M}(\emptyset ,\{(i,j)\})\mathbf{1} &=&x\frac{xm_{jj}b_{i}^{2}+xm_{ii}b_{j}^{2}+2(1-xm_{ij})b_{i}b_{j}}{(1-xm_{ij})^{2}-x^{2}m_{ii}m_{jj}} \\
\mathbf{1}^{\prime }\Delta \mathbf{M}(\{(k,l)\},\emptyset )\mathbf{1} &=&-x\frac{xm_{ll}b_{k}^{2}+xm_{kk}b_{l}^{2}-2(1+xm_{kl})b_{k}b_{l}}{(1-xm_{kl})^{2}-x^{2}m_{kk}m_{ll}}
\end{eqnarray*}

\noindent These equations characterize the impact of adding and deleting links on aggregate Katz-Bonacich centrality, as shown by \cite{Ballester2010} and \cite{sun2023}.

\subsection{Information Transmission}

Consider a model of information diffusion on a symmetric, unweighted social network $\mathbf{G}$ studied by \cite{banerjee2013diffusion}, \cite{Banerjee2019using}, and \cite{Cruz2017politician}. Information is transmitted in discrete iterations: at period 0, a sender possesses initial information; in subsequent periods, informed agents independently transmit this information to their network neighbors with probability $\delta \in (0,\frac{1}{\lambda_{\max}(\mathbf{G})})$. \cite{Bramoulle2018} enhanced this model by introducing an information recipient, motivated by political intermediation scenarios where citizens' requests are transmitted to politicians.

The diffusion process operates under three key assumptions:

\begin{ass} ~ ~

(i) Information transmission occurs only from agents informed in period $t-1$;

(ii) An agent receiving information from $s$ distinct sources in period $t-1$ initiates $s$ independent transmissions in period $t$;

(iii) The sender or receiver or both do not retransmit information (no re-transmission).

\end{ass}

We generalize \cite{Bramoulle2018}'s target centrality concept from individual nodes to a group of nodes. This extension addresses scenarios like state-level political representation, where citizens' requests may reach multiple representatives. Let $\xright{n}_{iA}$, $\xleft{n}_{iA}$, and $\xboth{n}_{iA}$ denote the expected number of times group $A$ receives a message from sender $i$ under different non-retransmission constraints: sender-only, recipient-only, and both sender-and-recipient, respectively. Formally, for any node set $A$, let $\mathcal{W}_{ij}^{t}(\lnot A)=\{w_{ij}\in \mathcal{W}_{ij}(\lnot A):\#(w_{ij})=t\}$ represent the set of length-$t$ walks from $i$ to $j$ avoiding nodes in $A$. The expected reception counts are then: 
\begin{align*}
\xright{n}_{iA} &:=\sum_{j\in A}\sum_{t=0}^{\infty}\delta^{t}|\mathcal{W}_{ij}^{t}(\lnot \{i\})|, \\
\xleft{n}_{iA}&:=\sum_{j\in A}\sum_{t=0}^{\infty}\delta^{t}|\mathcal{W}_{ij}^{t}(\lnot A)|, \\
\xboth{n}_{iA}&:=\sum_{j\in A}\sum_{t=0}^{\infty}\delta^{t}|\mathcal{W}_{ij}^{t}(\lnot (A\cup \{i\}))|.
\end{align*}

In this context, the non-retransmission group $A$ acts as a set of absorbing states where messages can enter but not exit. The expected number of times a receiver receives a message naturally relates to the convergent values of $\mathbf{M}^{\lnot A}$, which captures walks avoiding the non-retransmission group $A$. In what follows, we use $m_{ij}^{\lnot A}=M\left[\left[\mathcal{W}_{ij}(\lnot A);\delta\right]\right]$ to denote the corresponding convergent value of the generating function when $\delta\in (0,\frac{1}{\lambda_{\max}(\mathbf{G})})$, as we are concerned with the total number of message receptions rather than the formal power series.

\begin{proposition}
\label{pro:target} For any node set $A\subseteq N$ and node $i\notin A$, the expected message reception counts are: 
\begin{equation*}
\xright{n}_{iA}=\sum_{j\in A}m_{ij}^{\lnot\{i\}}; \quad 
\xleft{n}_{iA}=\sum_{j\in A}m_{ij}^{\lnot A};\quad 
\xboth{n}_{iA}=\sum_{j\in A}m_{ij}^{\lnot(A\cup\{i\})}.
\end{equation*}
For the special case where $A=\{j\}$: 
\begin{equation*}
\xright{n}_{ij}=\frac{m_{ij}}{m_{ii}}; \quad 
\xleft{n}_{ij}=\frac{m_{ij}}{m_{jj}}; \quad 
\xboth{n}_{ij}=\frac{m_{ij}}{m_{ii}m_{jj}-(m_{ij})^2}.
\end{equation*}
\end{proposition}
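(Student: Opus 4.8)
The plan is to split the statement into a routine part (the three identities valid for a general node set $A$) and a genuinely computational part (the closed forms for $A=\{j\}$), and to concentrate effort on the latter. For the general identities I would argue that each reception count is, by its own defining sum, nothing but a generating function evaluated at $x=\delta$. Concretely, for each of the three restrictions the inner sum $\sum_{t\ge 0}\delta^{t}|\mathcal{W}_{ij}^{t}(\lnot A)|$ equals $M[[\mathcal{W}_{ij}(\lnot A);\delta]]=m_{ij}^{\lnot A}$, and this series converges absolutely because $|\mathcal{W}_{ij}^{t}(\lnot A)|\le|\mathcal{W}_{ij}^{t}|=[\mathbf{G}^{t}]_{ij}$ while $\delta\in(0,1/\lambda_{\max}(\mathbf{G}))$. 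Summing over $j\in A$ then produces the three stated expressions. The weight $\delta^{t}$ attached to each admissible length-$t$ walk, together with additivity of expectation across distinct transmission paths, is exactly what Assumption~1(ii) supplies, so these sums genuinely record expected receptions.

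For the recipient-only case with $A=\{j\}$ we have $\xleft{n}_{ij}=m_{ij}^{\lnot\{j\}}$, and equation~\eqref{eq:FPDM}, namely $m_{ij}=m_{ij}^{\lnot\{j\}}\,m_{jj}$, gives $m_{ij}^{\lnot\{j\}}=m_{ij}/m_{jj}$ at once. For the sender-only case $\xright{n}_{ij}=m_{ij}^{\lnot\{i\}}$ I would establish the mirror image of \eqref{eq:FPD}, partitioning each $i$-$j$ walk (with $j\ne i$) at its \emph{last} visit to $i$: $\mathcal{W}_{ij}=\mathcal{W}_{ii}\odot\mathcal{W}_{ij}(\lnot\{i\})$. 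This decomposition is unique because the last visit to $i$ is determined by the walk while the second factor never revisits $i$; the product property then yields $m_{ij}=m_{ii}\,m_{ij}^{\lnot\{i\}}$, i.e.\ $m_{ij}^{\lnot\{i\}}=m_{ij}/m_{ii}$.

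The substantive case is $\xboth{n}_{ij}=m_{ij}^{\lnot\{i,j\}}$, where avoiding both endpoints resists any single first/last-passage argument. I would decompose every $i$-$j$ walk into its successive excursions between visits to the two-node set $\{i,j\}$: each segment joining two consecutive visits is a length-$\ge 1$ walk between nodes of $\{i,j\}$ whose interior avoids $\{i,j\}$. This unique decomposition is encoded by the $2\times 2$ transfer matrix $\mathbf{T}=\begin{pmatrix}\tilde a & p\\ q & \tilde d\end{pmatrix}$ (rows and columns indexed by $i,j$), where $p=m_{ij}^{\lnot\{i,j\}}$, $q=m_{ji}^{\lnot\{i,j\}}$, and $\tilde a=m_{ii}^{\lnot\{i,j\}}-1$, $\tilde d=m_{jj}^{\lnot\{i,j\}}-1$ count the \emph{nontrivial} return excursions, the $-1$ discarding the trivial walk. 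Since every excursion has length at least one, $\mathbf{T}$ has zero constant term, so $\mathbf{M}_{\{i,j\}\{i,j\}}=\sum_{k\ge 0}\mathbf{T}^{k}=(\mathbf{I}-\mathbf{T})^{-1}$. Writing $\Delta=\det(\mathbf{I}-\mathbf{T})=(1-\tilde a)(1-\tilde d)-pq$ and inverting gives $m_{ij}=p/\Delta$, $m_{ji}=q/\Delta$, $m_{ii}=(1-\tilde d)/\Delta$, $m_{jj}=(1-\tilde a)/\Delta$, whence $m_{ii}m_{jj}-m_{ij}m_{ji}=\Delta/\Delta^{2}=1/\Delta$ and therefore $p=m_{ij}/(m_{ii}m_{jj}-m_{ij}m_{ji})$. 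Symmetry of $\mathbf{G}$ finally makes $m_{ij}=m_{ji}$, turning the denominator into $m_{ii}m_{jj}-(m_{ij})^{2}$ and delivering the claimed form. I expect this last paragraph to be the main obstacle: the delicate points are rigorously proving uniqueness of the excursion decomposition and correctly accounting for trivial walks (the $-1$ corrections and the identity term in $(\mathbf{I}-\mathbf{T})^{-1}$), while the invocation of symmetry, though brief, is essential as it is the only place where undirectedness of the social network is used.
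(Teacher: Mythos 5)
Your proposal is correct, and for the single-pair formulas it takes a genuinely different route from the paper. The paper treats all three cases uniformly as instantiations of its general matrix machinery: from Theorem \ref{pro:GF} it derives $\mathbf{M}_{AA}^{\lnot A}=2\mathbf{I}-(\mathbf{M}_{AA})^{-1}$ and, via Proposition \ref{cor:nodes}, $\mathbf{M}_{AA^{C}}^{\lnot A}=(\mathbf{M}_{AA})^{-1}\mathbf{M}_{AA^{C}}$ and $\mathbf{M}_{A^{C}A}^{\lnot A}=\mathbf{M}_{A^{C}A}(\mathbf{M}_{AA})^{-1}$, then reads off the three closed forms by setting $A=\{i\}$, $\{j\}$, and $\{i,j\}$. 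You instead build three bespoke walk decompositions: the classical first-passage identity \eqref{eq:FPDM} for the target case, a last-visit (``mirror'') decomposition $\mathcal{W}_{ij}=\mathcal{W}_{ii}\odot\mathcal{W}_{ij}(\lnot\{i\})$ for the sender case, and an excursion decomposition with the $2\times 2$ transfer matrix $\mathbf{T}$ for the two-endpoint case. It is worth noting that since $\mathbf{T}=\mathbf{M}_{\{i,j\}\{i,j\}}^{\lnot\{i,j\}}-\mathbf{I}$, your renewal identity $\mathbf{M}_{\{i,j\}\{i,j\}}=(\mathbf{I}-\mathbf{T})^{-1}$ is exactly the paper's $\mathbf{M}_{AA}^{\lnot A}=2\mathbf{I}-(\mathbf{M}_{AA})^{-1}$ rearranged, so your excursion argument amounts to a direct combinatorial proof of the identity the paper obtains algebraically from Theorem \ref{pro:GF}. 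What your route buys is walk-level intuition and independence from Proposition \ref{cor:nodes}; its cost is that you must establish uniqueness for two decompositions not stated in the paper (cutting at the last visit to $i$, and cutting at every visit to $\{i,j\}$), which your sketch handles correctly, including the $-1$ corrections discarding trivial return walks and the $k=0$ identity term in $\sum_{k\geq 0}\mathbf{T}^{k}$. Both routes use symmetry of $\mathbf{G}$ only at the final step, to replace $m_{ij}m_{ji}$ by $(m_{ij})^{2}$, and your treatment of the general-$A$ identities (generating functions evaluated at $x=\delta$, with convergence from $\delta<1/\lambda_{\max}(\mathbf{G})$) coincides with the paper's definitional step.
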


Proposition \ref{pro:target} characterizes group target centrality using elements of $\mathbf{M}^{A}$, which can be derived from the matrix form of Theorem \ref{pro:GF}(i). For a group $A$, we define three aggregate measures of target centrality, corresponding to the expected number of times group $A$ receives information from all other individuals under different non-retransmission constraints:

\begin{equation*}
\xright{n}_{A}=\sum_{i\notin A}\xright{n}_{iA}, \quad 
\xleft{n}_{A}=\sum_{i\notin A}\xleft{n}_{iA}, \quad 
\xboth{n}_{A}=\sum_{i\notin A}\xboth{n}_{iA}
\end{equation*}

Combining Proposition \ref{pro:target} and the expression of $\mathbf{M}^{A}$ derived from Theorem \ref{pro:GF}(i) yields explicit formulas: 
\begin{align*}
\xright{n}_{A}&=\sum_{i\notin A}\sum_{j\in A}\frac{m_{ij}}{m_{ii}}, \\
\xleft{n}_{A}&=\mathbf{b}_{A}(\mathbf{M}_{AA})^{-1}\mathbf{1}-|A|, \\
\xboth{n}_{A} &=2|A|-\sum_{i\notin A}\left(\mathbf{1}^{\prime}(\mathbf{M}_{A\cup \{i\},A\cup \{i\}})^{-1}\mathbf{1}-[(\mathbf{M}_{A\cup \{i\},A\cup \{i\}})^{-1}]_{ii}\right).
\end{align*}

For single-node targeting ($A=\{j\}$), the formulas demonstrate that expected reception counts increase with the proximity measure $m_{ij}$ between nodes while decreasing with the self-loops ($m_{ii}$ and $m_{jj}$) of nodes excluded from retransmission. Similar to our discussion of group intercentrality in the previous section, the greedy algorithm fails to identify the optimal target group because internal connectivity $\mathbf{M}_{AA}$ significantly influences the value of target centrality.

Our characterization offers advantages over \cite{Bramoulle2018}'s approach in both computational efficiency and analytical power. Computationally, we express all target centrality measures for single nodes using only the Leontief inverse matrix $\mathbf{M}=(\mathbf{I}-\delta \mathbf{G})^{-1}$, eliminating the need for the multiple additional matrices (including the modified adjacency matrices $\mathbf{G}_{-i}$) required in their framework. As \cite{Bramoulle2018} states: “Under no retransmission by both sender and target, it (target centrality) involves the inverses of one matrix of size $n$ by $n$ and $n$ matrices of size $n-1$ by $n-1$.” In contrast, our formulas in Proposition \ref{pro:target} demonstrate that only the elements of a single $n \times n$ inverse matrix are sufficient to characterize target centrality, avoiding the computational burden of inverting $n$ additional matrices of size $(n-1) \times (n-1)$. Analytically, our representation enables comparative static analysis of how both node and link modifications affect information diffusion patterns throughout the network.

Building on \cite{Ballester2006}'s concept of key players in delinquent networks, we examine a crucial question in information diffusion: who are the key information intermediaries? Our walk-counting approach enables assessment of how removing single nodes affects information flows between pairs of nodes. We focus on the case where both sender and target nodes do not retransmit, thus $\xboth{n}_{ij}=m_{ij}^{\lnot \{i,j\}}$ represents the expected number of times $j$ receives $i$'s request.

\begin{definition}
\label{defn:intermediary-centrality} The key intermediary for sender-target pair $(i,j)$ is 
\begin{equation*}
k^{int}\in \arg\max_{k\notin \{i,j\}}\left(\xboth{n}_{ij}-m_{ij}^{\lnot \{i,j,k\}}\right)
\end{equation*}
or equivalently, 
\begin{equation*}
k^{int}\in \arg\min_{k\notin \{i,j\}}m_{ij}^{\lnot\{i,j,k\}}.
\end{equation*}
\end{definition}

The intermediary index of node $k$ equals the difference between the original expected reception counts and the expected counts after removing node $k$. The key intermediary thus represents the node whose blocked retransmission most effectively impedes information flow from $i$ to $j$.

\begin{corr}
\label{corr:component-intermediary} For any node set $\{i,j,k\}\in N$: 
\begin{align*}
m_{ij}^{\lnot\{i,j,k\}} &= m_{ij}-\frac{m_{ij}m_{kk}-m_{ik}m_{jk}}{m_{ii}m_{jj}m_{kk}+2m_{ij}m_{ik}m_{jk}-m_{ii}m_{jk}^2-m_{ij}^2m_{kk}-m_{ik}^2m_{jj}} \\
&= m_{ij}-\frac{m_{ij}-m_{ij}^{\lnot\{k\}}}{m_{ii}m_{jj}+2m_{ij}m_{ij}^{\lnot\{k\}}-m_{ii}m_{jj}^{\lnot\{k\}}-m_{ij}^2-m_{jj}m_{ii}^{\lnot\{k\}}}.
\end{align*}
\end{corr}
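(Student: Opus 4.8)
The plan is to obtain $m_{ij}^{\lnot\{i,j,k\}}$ by a two-stage first-passage reduction that invokes only Proposition~\ref{cor:nodes} and Proposition~\ref{pro:target}: first suppress the interior node $k$, then suppress the endpoints $i,j$. The enabling observation is a locality fact. For $p,q\notin\{k\}$, a walk counted by $m_{pq}^{\lnot\{k\}}$ avoids $k$ in its interior and has $p,q\neq k$, hence avoids $k$ altogether, so the array $\big(m_{pq}^{\lnot\{k\}}\big)_{p,q\neq k}$ is exactly the walk generating-function matrix of the induced subnetwork $\mathbf{G}_{-k}$ obtained by deleting node $k$ and its incident links. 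The singleton specialization of Proposition~\ref{cor:nodes} supplies these entries explicitly, $m_{pq}^{\lnot\{k\}}=m_{pq}-m_{pk}m_{kq}/m_{kk}$.

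First I would apply the pair formula of Proposition~\ref{pro:target} — the case $A=\{i,j\}$, namely $m_{ij}^{\lnot\{i,j\}}=m_{ij}/(m_{ii}m_{jj}-m_{ij}^{2})$ — not to $\mathbf{G}$ but to the subnetwork $\mathbf{G}_{-k}$. There every symbol $m_{\cdot\cdot}$ is replaced by its $k$-avoiding analogue $m_{\cdot\cdot}^{\lnot\{k\}}$, and an $i$-$j$ walk in $\mathbf{G}_{-k}$ that avoids $i,j$ as interior nodes is precisely an $i$-$j$ walk avoiding all of $\{i,j,k\}$ as interior nodes. This yields the ratio identity
\[
m_{ij}^{\lnot\{i,j,k\}}=\frac{m_{ij}^{\lnot\{k\}}}{m_{ii}^{\lnot\{k\}}\,m_{jj}^{\lnot\{k\}}-\big(m_{ij}^{\lnot\{k\}}\big)^{2}},
\]
which is the second displayed expression in its $k$-avoiding form. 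Substituting the three single-node-removal identities for $m_{ii}^{\lnot\{k\}}$, $m_{jj}^{\lnot\{k\}}$, $m_{ij}^{\lnot\{k\}}$ then collapses this into the first (determinant) form.

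The remaining work is algebraic. Writing each $k$-avoiding term over the common denominator $m_{kk}$, the numerator becomes $m_{ij}^{\lnot\{k\}}=(m_{ij}m_{kk}-m_{ik}m_{jk})/m_{kk}$, while after cancelling one factor of $m_{kk}$ the denominator $m_{ii}^{\lnot\{k\}}m_{jj}^{\lnot\{k\}}-\big(m_{ij}^{\lnot\{k\}}\big)^{2}$ condenses to $\det\mathbf{M}_{\{i,j,k\}\{i,j,k\}}/m_{kk}$, i.e.\ to $m_{ii}m_{jj}m_{kk}+2m_{ij}m_{ik}m_{jk}-m_{ii}m_{jk}^{2}-m_{ij}^{2}m_{kk}-m_{ik}^{2}m_{jj}$ over $m_{kk}$; the two stray factors of $m_{kk}$ cancel, leaving the closed form in the original $m_{\cdot\cdot}$. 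I expect the genuine obstacle to be not this routine $3\times3$ determinant bookkeeping but the justification of the locality step — that Proposition~\ref{pro:target}, derived from Theorem~\ref{pro:GF} for an arbitrary network, transfers verbatim to $\mathbf{G}_{-k}$. This rests on the fact that the walk sets $\mathcal{W}_{pq}(\lnot\{k\})$ with $p,q\neq k$ are literally the walk sets of a network, so the first-passage decomposition of Lemma~\ref{pro:walk decomposition}, and hence Theorem~\ref{pro:GF} and Proposition~\ref{pro:target}, apply unchanged inside it. A more direct alternative avoids the two stages altogether: one proves $m_{ij}^{\lnot A}=-\big[(\mathbf{M}_{AA})^{-1}\big]_{ij}$ for distinct $i,j\in A$ by the same first-passage argument underlying Theorem~\ref{pro:GF}, and then reads the answer off the $(i,j)$ cofactor of the $3\times3$ block $\mathbf{M}_{\{i,j,k\}\{i,j,k\}}$.
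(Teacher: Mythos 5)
Your argument is mathematically sound, and its main route genuinely differs from the paper's. The paper proves the first equality in one shot: it invokes the identity $\mathbf{M}_{AA}^{\lnot A}=2\mathbf{I}-(\mathbf{M}_{AA})^{-1}$ (established inside the proof of Proposition \ref{pro:target}) with $A=\{i,j,k\}$ and reads off the off-diagonal entry --- which is precisely the ``more direct alternative'' you sketch at the end, $m_{ij}^{\lnot A}=-\bigl[(\mathbf{M}_{AA})^{-1}\bigr]_{ij}$ for distinct $i,j\in A$, a $3\times 3$ cofactor computation; the second equality it then obtains by substituting the singleton-removal identities coming from $\mathbf{M}^{A}_{A^{C}A^{C}}=\mathbf{M}_{A^{C}A}(\mathbf{M}_{AA})^{-1}\mathbf{M}_{AA^{C}}$, exactly as in Proposition \ref{cor:nodes}. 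Your primary route instead suppresses $k$ first, identifies $\bigl(m^{\lnot\{k\}}_{pq}\bigr)_{p,q\neq k}$ with the walk generating matrix of $\mathbf{G}_{-k}$, and applies the pair formula of Proposition \ref{pro:target} inside that subnetwork. Your locality step is correctly justified --- Lemma \ref{pro:walk decomposition} and Theorem \ref{pro:GF} are stated for an arbitrary network, and a walk between $p,q\neq k$ that avoids $k$ internally avoids $k$ altogether --- and your determinant bookkeeping checks out: $m_{ii}^{\lnot\{k\}}m_{jj}^{\lnot\{k\}}-\bigl(m_{ij}^{\lnot\{k\}}\bigr)^{2}=\det\mathbf{M}_{\{i,j,k\}\{i,j,k\}}/m_{kk}$ and the stray $m_{kk}$ factors cancel. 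What the two-stage route buys is the $k$-avoiding intermediate form for free and a transparent interpretation (delete the candidate intermediary, then apply two-node target centrality); what it costs is the locality lemma plus, if one insists on convergent values as Proposition \ref{pro:target} does, the observation that $\lambda_{\max}(\mathbf{G}_{-k})\leq\lambda_{\max}(\mathbf{G})$ so the same $\delta$ remains admissible in $\mathbf{G}_{-k}$ --- or, cleaner, one runs $\mathbf{M}_{AA}^{\lnot A}=2\mathbf{I}-(\mathbf{M}_{AA})^{-1}$ formally in the power-series ring, as the paper does, and sidesteps convergence entirely.

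One caution about matching the printed statement. What your derivation (and the paper's own proof) actually establishes is
\begin{equation*}
m_{ij}^{\lnot\{i,j,k\}}=\frac{m_{ij}m_{kk}-m_{ik}m_{jk}}{m_{ii}m_{jj}m_{kk}+2m_{ij}m_{ik}m_{jk}-m_{ii}m_{jk}^{2}-m_{ij}^{2}m_{kk}-m_{ik}^{2}m_{jj}},
\end{equation*}
\emph{without} the leading ``$m_{ij}-$'' appearing in the corollary's display: on the path $i$--$k$--$j$ the left side is $0$ (every $i$-$j$ walk passes $k$ internally) and the numerator $m_{ij}m_{kk}-m_{ik}m_{jk}$ indeed vanishes there, whereas the printed right-hand side would equal $m_{ij}\neq 0$. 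Similarly, your ratio $m_{ij}^{\lnot\{k\}}\big/\bigl(m_{ii}^{\lnot\{k\}}m_{jj}^{\lnot\{k\}}-(m_{ij}^{\lnot\{k\}})^{2}\bigr)$ is \emph{not} literally the printed second display, which fails the same test; so your parenthetical claim that it ``is the second displayed expression'' should be dropped. In short, you have proved the corrected form of the corollary --- the same form the paper's proof sketch delivers --- and the discrepancy lies in the printed statement, not in your argument.
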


This result, derived directly from the expression of $\mathbf{M}^{\lnot A}$ with $A=\{i,j,k\}$, reveals key insights about intermediary importance. A node $k$ becomes more critical as an intermediary for pair $(i,j)$ when it plays a stronger role in connecting $i$ and $j$ but has less influence on their self-loops. Note that if we allow for retransmission by both the sender and the target, then the key intermediary is exactly the node which maximizes $m_{ij}^{\lnot\{k\}}$. However, with no retransmission, we must account for $k$'s role in walk facilitation differently. Given fixed $m_{ij}^{\lnot\{k\}}$, the most influential intermediary minimizes importance in sender and target self-loops, as these represent redundant retransmissions.

\begin{ex}
Consider the 15-node network illustrated in Figure \ref{fg:2}, where we analyze information flow between sender $i$ and recipient $j$ with discount factor $\delta=0.18$. Table 1 compares how different nodes $k$ affect the sender-target pair under both retransmission and no-retransmission scenarios.\footnote{The table presents six representative nodes. Other nodes in the network are structurally equivalent to one of these representatives regarding their impact on the $(i,j)$ pair.}

\begin{figure}[htp]
\centering
\includegraphics[scale=0.6]{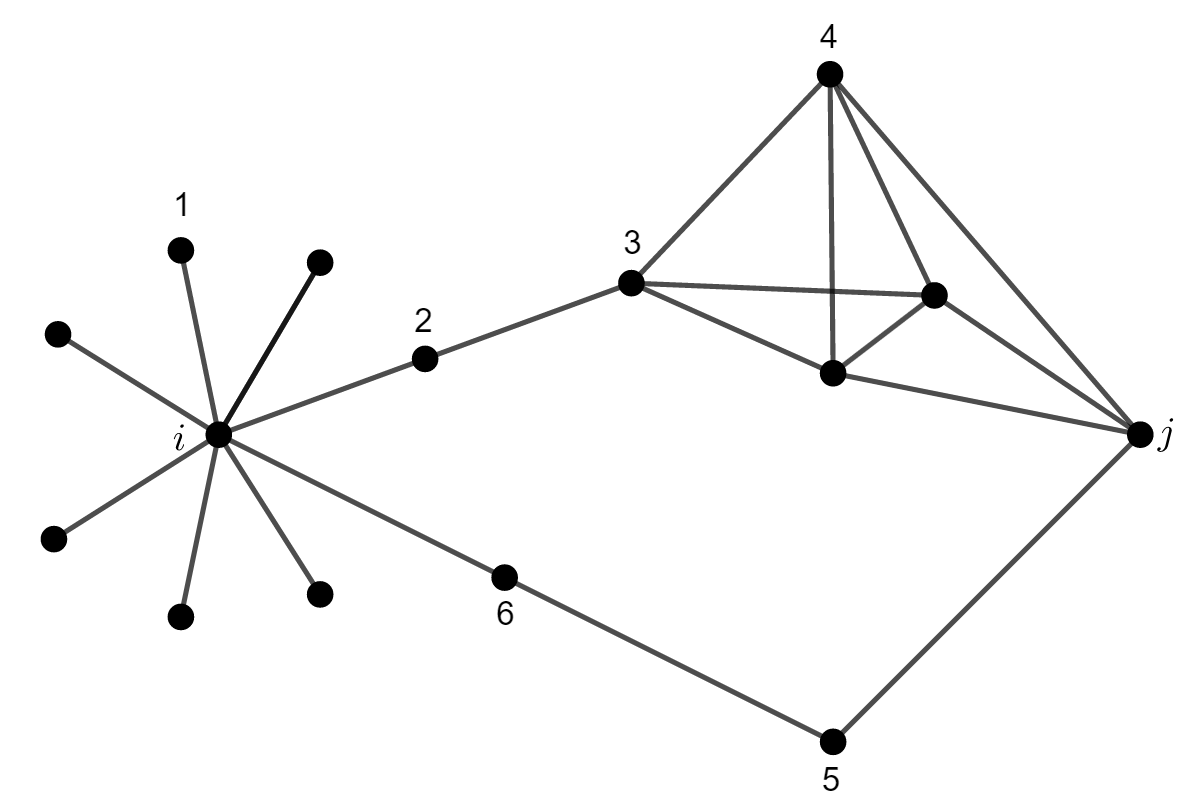}
\caption{Key intermediaries for information flow from $i$ to $j$}
\label{fg:2}
\end{figure}

\begin{center}
\begin{tabular}{|c|c|c|}
\hline
Node $k$ & $m_{ij}^{\{k\}}$ & $m_{ij}^{\{i,j,k\}}$ \\ \hline
1 & 0.001385 & 0.008714 \\ \hline
2 & $\mathbf{0.011554}^{\ast}$ & $\mathbf{0.014747}^{\ast}$ \\ \hline
3 & 0.011433 & $\mathbf{0.014747}^{\ast}$ \\ \hline
4 & 0.007506 & 0.011865 \\ \hline
5 & 0.011488 & 0.014742 \\ \hline
6 & 0.011542 & 0.014742 \\ \hline
\end{tabular}

Table 1: Intermediary importance measures
\end{center}

The analysis reveals distinct key intermediaries under different transmission scenarios. With retransmission allowed, node 2 emerges as the most critical intermediary by maximizing $m_{ij}^{\lnot\{k\}}$, indicating it facilitates the most length-discounted walks between $i$ and $j$. However, when neither $i$ nor $j$ can retransmit, both nodes 2 and 3 become equally important key intermediaries by maximizing $m_{ij}^{\lnot\{i,j,k\}}$.

A noteworthy comparison between nodes 6 and 3 demonstrates that intermediary importance isn't monotonic in $m_{ij}^{\lnot\{k\}}$. While node 6 facilitates more length-discounted walks ($m_{ij}^{\lnot\{6\}} > m_{ij}^{\lnot\{3\}}$), node 3's removal more significantly reduces expected transmission frequency under no-retransmission ($m_{ij}^{\lnot\{i,j,3\}} > m_{ij}^{\lnot\{i,j,6\}}$).
\end{ex}

\subsection{Optimal Link Construction}

Consider two nodes $i,j$ in an undirected network $\mathbf{G}$. Let $T\subseteq N\setminus (N_i \cup N_j)$ be a set of nodes, none of which is a neighbor of either $i$ or $j$. We compare the walks generated by connecting $i$ with all nodes in $T$ versus connecting $j$ with all nodes in $T$. This comparison provides valuable insights for network design problems where we must determine the optimal allocation of a fixed number of links.

We denote by $\mathbf{\hat{G}=G+}\sum_{k\in T}\mathbf{E}_{ik}$ and $\mathbf{\mathring{G}=G+}\sum_{k\in T}\mathbf{E}_{jk}$ the networks obtained from $\mathbf{G}$ by connecting $i$ with $T$ and $j$ with $T$, respectively. Furthermore, $\mathcal{\hat{W}}$ and $\mathcal{\mathring{W}}$ represent the sets of all walks in these two networks. For two generating functions $M[[\mathcal{W^{\prime}};x]]=\sum_{t\geq 0}|f_{\mathcal{W^{\prime}}}^{-1}(t)|\cdot x^{t}$ and $M[[\mathcal{W^{\prime\prime}};x]]=\sum_{t\geq 0}|f_{\mathcal{W^{\prime\prime}}}^{-1}(t)|\cdot x^{t}$, we say $M[[\mathcal{W^{\prime}};x]]$ dominates $M[[\mathcal{W^{\prime\prime}};x]]$ in the ring of formal power series, denoted by $M[[\mathcal{W^{\prime}};x]]\gtrsim M[[\mathcal{W^{\prime\prime}};x]]$, if and only if $|f_{\mathcal{W^{\prime}}}^{-1}(t)|\geq |f_{\mathcal{W^{\prime\prime}}}^{-1}(t)|$ for all $t$. Thus, two generating functions are comparable only when the coefficients of one generating function are pointwise greater than or equal to those of the other.

\begin{pro}
\label{pro:belhaj}
In $\mathbf{G}$, when $m_{ii}\gtrsim m_{jj}$ and $m_{il}\gtrsim m_{jl}$ for any $l\notin \{i,j\}$, we have 
\begin{equation*}
M[[\mathcal{\hat{W}};x]]\gtrsim M[[\mathcal{\mathring{W}};x]]\text{.}
\end{equation*}
In particular, $M[[\mathcal{\hat{W}}_{l};x]]\gtrsim M[[\mathcal{\mathring{W}}_{l};x]]$ for any $l\neq j$ and $M[[\mathcal{\hat{W}}_{i}\cup \mathcal{\hat{W}}_{j};x]]\gtrsim M[[\mathcal{\mathring{W}}_{i}\cup \mathcal{\mathring{W}}_{j};x]]$.
\end{pro}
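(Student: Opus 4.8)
The plan is to reduce the comparison to a pair of size-$2$ formal–power–series identities and then invoke coefficientwise monotonicity of Cauchy products. Write $\mathbf{t}=\sum_{k\in T}\mathbf{e}_{k}$ for the indicator vector of $T$, so connecting $i$ to $T$ adds the symmetric link matrix $\mathbf{L}_{i}=\mathbf{e}_{i}\mathbf{t}^{\prime}+\mathbf{t}\mathbf{e}_{i}^{\prime}$ and connecting $j$ to $T$ adds $\mathbf{L}_{j}=\mathbf{e}_{j}\mathbf{t}^{\prime}+\mathbf{t}\mathbf{e}_{j}^{\prime}$. Since $M[[\mathcal{\hat{W}};x]]=\mathbf{1}^{\prime}\hat{\mathbf{M}}\mathbf{1}$ and likewise for $\mathring{\mathbf{M}}$, Proposition~\ref{cor:link} with $\mathcal{L}_{-}=\emptyset$ gives $\hat{\mathbf{M}}=\mathbf{M}+x\mathbf{M}\mathbf{L}_{i}(\mathbf{I}-x\mathbf{L}_{i}\mathbf{M})^{-1}\mathbf{M}$, and subtracting the analogue for $\mathring{\mathbf{M}}$ leaves
\begin{equation*}
M[[\mathcal{\hat{W}};x]]-M[[\mathcal{\mathring{W}};x]]=x\,\mathbf{b}^{\prime}\mathbf{L}_{i}(\mathbf{I}-x\mathbf{L}_{i}\mathbf{M})^{-1}\mathbf{b}-x\,\mathbf{b}^{\prime}\mathbf{L}_{j}(\mathbf{I}-x\mathbf{L}_{j}\mathbf{M})^{-1}\mathbf{b},
\end{equation*}
where $\mathbf{b}=\mathbf{M}\mathbf{1}$ (using $\mathbf{M}=\mathbf{M}^{\prime}$). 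Expanding the resolvent as a geometric series groups this ``walks using the new links at least once'' term by the number $r\ge 1$ of traversals of new links, matching the first-passage decomposition by crossing count described in the introduction.

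Next I would collapse each quadratic form to size $2$. With $\mathbf{U}_{i}=[\,\mathbf{e}_{i}\ \ \mathbf{t}\,]$ and the swap $\mathbf{J}=\left(\begin{smallmatrix}0&1\\1&0\end{smallmatrix}\right)$ one has $\mathbf{L}_{i}=\mathbf{U}_{i}\mathbf{J}\mathbf{U}_{i}^{\prime}$, and the telescoping identity $(\mathbf{L}_{i}\mathbf{M})^{r-1}\mathbf{L}_{i}=\mathbf{U}_{i}(\mathbf{J}\mathbf{P}_{i})^{r-1}\mathbf{J}\mathbf{U}_{i}^{\prime}$, with $\mathbf{P}_{i}=\mathbf{U}_{i}^{\prime}\mathbf{M}\mathbf{U}_{i}=\left(\begin{smallmatrix}m_{ii}&m_{iT}\\ m_{iT}&m_{TT}\end{smallmatrix}\right)$, yields
\begin{equation*}
x\,\mathbf{b}^{\prime}\mathbf{L}_{i}(\mathbf{I}-x\mathbf{L}_{i}\mathbf{M})^{-1}\mathbf{b}=\sum_{r\ge 1}x^{r}\,\mathbf{v}_{i}^{\prime}\mathbf{J}(\mathbf{P}_{i}\mathbf{J})^{r-1}\mathbf{v}_{i},\qquad \mathbf{v}_{i}=\mathbf{U}_{i}^{\prime}\mathbf{b}=(b_{i},\,b_{T})^{\prime},
\end{equation*}
where $b_{T}=\sum_{k\in T}b_{k}$, $m_{iT}=\sum_{k\in T}m_{ik}$, $m_{TT}=\sum_{k,k^{\prime}\in T}m_{kk^{\prime}}$, and identically for $j$ with $\mathbf{v}_{j}=(b_{j},b_{T})^{\prime}$ and $\mathbf{P}_{j}=\left(\begin{smallmatrix}m_{jj}&m_{jT}\\ m_{jT}&m_{TT}\end{smallmatrix}\right)$. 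Crucially $b_{T}$ and $m_{TT}$ are common to both sides, since $T$ and the underlying $\mathbf{G}$ are shared.

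The comparison then rests on two facts. First, the hypotheses give coefficientwise domination of the reduced data: $m_{ii}\gtrsim m_{jj}$ and $m_{iT}\gtrsim m_{jT}$ (each $k\in T$ has $k\neq i,j$), so $\mathbf{P}_{i}\gtrsim\mathbf{P}_{j}$ entrywise; and $b_{i}-b_{j}=(m_{ii}-m_{jj})+\sum_{l\neq i,j}(m_{il}-m_{jl})\gtrsim 0$ after the $m_{ij}=m_{ji}$ terms cancel, so $\mathbf{v}_{i}\gtrsim\mathbf{v}_{j}$. Second, every coefficient of $\mathbf{v}_{i}^{\prime}\mathbf{J}(\mathbf{P}_{i}\mathbf{J})^{r-1}\mathbf{v}_{i}$ is a polynomial with nonnegative integer coefficients in the entries of $\mathbf{v}_{i}$ and $\mathbf{P}_{i}$ (the fixed matrix $\mathbf{J}$ being nonnegative), and all these entries are power series with nonnegative coefficients; since the Cauchy product preserves the order $\gtrsim$ on such series, each summand dominates, and summing over $r$ gives $M[[\mathcal{\hat{W}};x]]\gtrsim M[[\mathcal{\mathring{W}};x]]$. (Equivalently this is an induction on $r$: walks crossing the new links $r+1$ times decompose uniquely as a first crossing concatenated with walks crossing them $r$ times, the iterative scheme flagged in the introduction.) The refinements run the same reduction with the left vector $\mathbf{b}^{\prime}$ replaced by a single row $\mathbf{e}_{l}^{\prime}\mathbf{M}$, whose reduced form is $\mathbf{u}_{l}^{(i)}=(m_{li},m_{lT})^{\prime}$ versus $\mathbf{u}_{l}^{(j)}=(m_{lj},m_{lT})^{\prime}$: for $l\neq i,j$ one has $m_{li}=m_{il}\gtrsim m_{jl}=m_{lj}$, and for $\mathcal{\hat{W}}_{i}\cup\mathcal{\hat{W}}_{j}$ the source row has first entry $m_{ii}+m_{ij}\gtrsim m_{jj}+m_{ij}$, so monotonicity applies verbatim.

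The one genuinely delicate instance, and where I expect the main work to lie, is the \emph{single} source $l=i$ hidden inside the clause ``$l\neq j$'': there the first entries to be compared are $m_{ii}$ versus $m_{ij}$, and these are \emph{not} ordered by the hypotheses. When $i\not\sim j$ the one-step expansion \eqref{eq:KB2} rescues it, since $m_{ii}-m_{ij}=1+x\sum_{l\in N_{i}}(m_{il}-m_{jl})$ and (as the length-$1$ coefficients of the hypotheses force $N_{j}\setminus\{i\}\subseteq N_{i}$) every $l\in N_{i}$ here satisfies $l\neq i,j$, giving $m_{ii}\gtrsim m_{ij}$ and hence the termwise monotonicity. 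But when $i\sim j$ one has $m_{ij}\gtrsim m_{ii}$ failing already at length $1$ (indeed $m_{ii}$ has no length-$1$ term while $m_{ij}$ does), so the clean termwise argument breaks even though the aggregate $\hat{b}_{i}-\mathring{b}_{i}$ is nonnegative (its length-$1$ coefficient is exactly the degree increase $|T|$ at $i$). Closing this case is the crux: I would either show the negative contribution $(m_{ii}-m_{ij})b_{T}$ is absorbed by the accompanying $m_{iT}(b_{i}-b_{j})$ and higher $\mathbf{P}_{i}$-factors once summed over $r$, or argue combinatorially that each walk from $i$ in $\mathring{\mathbf{G}}$ injects length-preservingly into one from $i$ in $\hat{\mathbf{G}}$, exploiting that in $\hat{\mathbf{G}}$ the new links emanate from the source node itself.
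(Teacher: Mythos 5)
Your reduction is correct as far as it goes, and despite the matrix packaging it is in substance the same argument as the paper's: expanding the resolvent $(\mathbf{I}-x\mathbf{L}_{i}\mathbf{M})^{-1}$ groups walks by the number $r$ of traversals of the new links, which is precisely the paper's induction on the sets $\mathcal{\hat{W}}_{k}^{(n)}(\mathcal{L}_{i})$ built from the first-passage concatenation (you note this equivalence yourself). Your rank-two collapse via $\mathbf{U}_{i}$, $\mathbf{P}_{i}$, $\mathbf{v}_{i}$, together with coefficientwise monotonicity of Cauchy products of nonnegative series, correctly delivers the aggregate claim ($\mathbf{v}_{i}\gtrsim\mathbf{v}_{j}$ after the $m_{ij}=m_{ji}$ cancellation, and $\mathbf{P}_{i}\gtrsim\mathbf{P}_{j}$ entrywise), the claim for every source $l\notin\{i,j\}$ (data $(m_{li},m_{lT})$ versus $(m_{lj},m_{lT})$), and the union claim (first entries $m_{ii}+m_{ij}$ versus $m_{jj}+m_{ij}$). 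Indeed your entrywise-data formulation is slightly tighter than the paper's appendix at one point: the appendix states the pair-sum inductive hypothesis at level $n$ as the pair sum itself, whereas the recursion actually needs the cross comparison $M[[\mathcal{\hat{W}}_{i}^{(n)}(\mathcal{L}_{i});x]]\gtrsim M[[\mathcal{\mathring{W}}_{j}^{(n)}(\mathcal{L}_{j});x]]$ together with the $l\in T$ comparisons — exactly what your $\mathbf{v}_{i}\gtrsim\mathbf{v}_{j}$ supplies.

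However, the gap you flag is genuine and your proposal does not close it: the clause ``for any $l\neq j$'' includes $l=i$, and there the reduced first entries are $m_{ii}$ versus $m_{ij}$, which the hypotheses do not order when $g_{ij}=1$ (the $x^{1}$ coefficients are $0$ and $1$ respectively). Your patch for $i\not\sim j$ via \eqref{eq:KB2} is correct, but for $i\sim j$ you only sketch two candidate strategies without executing either, so as written you prove strictly less than the stated proposition. It is worth knowing that the paper's own appendix has the same lacuna: it establishes domination for sources $k\notin\{i,j\}$, the cross comparison of $\mathcal{\hat{W}}_{i}^{(n)}$ against $\mathcal{\mathring{W}}_{j}^{(n)}$, and the pair sum, but never $M[[\mathcal{\hat{W}}_{i};x]]\gtrsim M[[\mathcal{\mathring{W}}_{i};x]]$ — and running its induction for that comparison would require the same unavailable inequality $m_{ii}\gtrsim m_{ij}$ that blocks you. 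Cheap repairs fail: one cannot chain $\mathcal{\hat{W}}_{i}\gtrsim\mathcal{\mathring{W}}_{j}\gtrsim\mathcal{\mathring{W}}_{i}$, because $M[[\mathcal{\mathring{W}}_{j};x]]\gtrsim M[[\mathcal{\mathring{W}}_{i};x]]$ is false in general (take $i$ the center of a large star and $j$ a leaf adjacent to $i$), nor does the pair sum combined with any provable bound on $\mathcal{\hat{W}}_{j}$ versus $\mathcal{\mathring{W}}_{j}$ rescue it, since the relevant $\mathbf{P}$-domination points the wrong way. Closing the $l=i$, $i\sim j$ case genuinely requires new work — your proposed length-preserving injection of walks from $i$ in $\mathbf{\mathring{G}}$ into walks from $i$ in $\mathbf{\hat{G}}$, exploiting that the new links emanate from the source itself, is the most promising of your two sketches.
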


Note that the conditions $m_{ii}\gtrsim m_{jj}$ and $m_{il}\gtrsim m_{jl}$ for any $l\notin \{i,j\}$ are equivalent to the neighborhood containment relationship $N_{i}\cup \{i\} \supseteq N_{j}\cup \{j\}$.\footnote{The fact  that neighborhood nestedness, $N_{i}\cup \{i\} \supseteq N_{j}\cup \{j\}$, implies dominance in formal power series can be shown inductively using the relationship $|f_{\mathcal{W}_{il}}^{-1}(t)|=\sum_{k \in N_{i}}|f_{\mathcal{W}_{kl}}^{-1}(t-1)|$. The converse direction follows from the length-one walks relation in the formal power series: $m_{il}\gtrsim m_{jl}, \forall l\notin \{i,j\}$.} This proposition establishes that adding links between node $i$ and set $T$ generates more walks of arbitrary length throughout the network than adding equivalent links between node $j$ and set $T$, provided that node $i$'s neighborhood forms a superset of node $j$'s neighborhood. We make three remarks about this result:

\begin{enumerate}
  \item
  This proposition strengthens \cite{Belhaj2016}'s Lemma 1, which states that shifting all neighbors of $j$ (except those who are also $i$'s neighbors) to $i$ increases aggregate Katz-Bonacich centrality. Specifically, if in network $\mathbf{\mathring{G}}$ we define $L=\{l\in N_{j}:l\notin N_{i}\}$, then $\mathbf{\hat{G}=\mathring{G}-}\sum_{l\in L}\mathbf{E}_{jl}+\sum_{l\in L}\mathbf{E}_{il}$ generates higher Katz-Bonacich centrality than $\mathbf{\mathring{G}}$. Our proposition replicates \cite{Belhaj2016}'s result when the generating functions converge by setting $T=L$ and $\mathbf{G}=\mathbf{\mathring{G}-}\sum_{l\in L}\mathbf{E}_{jl}$ in our statement.

\item   When $x$ is sufficiently small and the generating functions converge, Proposition \ref{pro:belhaj} can be extended to the converged numerical values (rather than formal power series). This result implies that for any convex and increasing function $f$, $\sum_{k\in N}f(\hat{b}_{k}) \geq \sum_{k\in N}f(\mathring{b}_{k})$, where $\hat{b}_{k}$ and $\mathring{b}_{k}$ are the converged values of $M[[\mathcal{\hat{W}}_{k};x]]$ and $M[[\mathcal{\mathring{W}}_{k};x]]$ respectively.\footnote{Here we use the fact that when $\hat{b}_{i}\geq \max\{\mathring{b}_{i},\mathring{b}_{j}\}$ and $\hat{b}_{i}+\hat{b}_{j}\geq \mathring{b}_{i}+\mathring{b}_{j}$, then $f(\hat{b}_{i})+f(\hat{b}_{j}) \geq f(\mathring{b}_{i})+f(\mathring{b}_{j})$.} This result generalizes \cite{Belhaj2016}'s Lemma 1 from the sum of Katz-Bonacich centralities to the sum of any convex and increasing function of Katz-Bonacich centrality.

\item 
When $T=\{k\}$ is a singleton, our proposition establishes that $M[[\mathcal{\hat{W}}_{k};x]]\gtrsim M[[\mathcal{\mathring{W}}_{k};x]]$. That is, connecting $k$ with $i$ results in more walks initiating from $k$ than connecting $k$ with $j$. This implies Lemma 3 in \cite{Konig2014}, which states that connecting with $i$ generates higher Katz-Bonacich centrality for node $k$ than connecting with $j$.

\end{enumerate}

This result is not entirely novel. In our companion paper \cite{sun2025}, we established it in Lemma 1 using a complex mathematical induction approach (see that paper for further discussion of the implications). Here, we provide an alternative proof using the first passage decomposition.

Let $\mathcal{L}_{i}=\{(i,k),(k,i):k\in T\}$ be the set of links between $i$ and $T$. Similarly, define $\mathcal{L}_{j}$ as the set of links between $j$ and $T$. For any $k,l\in N$, we define the set of $k$-$l$ walks in network $\mathbf{\hat{G}}$ that pass through the link set $\mathcal{L}_{i}$ exactly $n$ times as:
\begin{align*}
\mathcal{\hat{W}}_{kl}^{(n)}(\mathcal{L}_{i}):=\left\{ w\in \mathcal{W}: 
\begin{array}{l}
w=(i_{0},i_{1},\ldots,i_{K}) \text{ where } i_{0}=k, i_{K}=l, \text{ and} \\
(i_{t},i_{t+1})\in \mathcal{L}_{i} \text{ for exactly } n \text{ values of } t
\end{array}
\right\}.
\end{align*}

\noindent We analogously define $\mathcal{\hat{W}}_{k}^{(n)}(\mathcal{L}_{i})$ and $\mathcal{\mathring{W}}_{kl}^{(n)}(\mathcal{L}_{j})$ as the set of walks initiating from $k$ that pass through links in $\mathcal{L}_{i}$ exactly $n$ times, and the set of $k$-$l$ walks in network $\mathbf{\mathring{G}}$ passing through links in $\mathcal{L}_{j}$ exactly $n$ times, respectively. We can decompose:
\begin{equation*}
\mathcal{\hat{W}}_{kl}=\bigcup_{n\geq 0}\mathcal{\hat{W}}_{kl}^{(n)}(\mathcal{L}_{i})=\mathcal{\hat{W}}_{kl}(\lnot \mathcal{L}_{i})\cup\bigcup_{n\geq 1}\mathcal{\hat{W}}_{kl}^{(n)}(\mathcal{L}_{i})=\mathcal{W}_{kl}\cup\bigcup_{n\geq 1}\mathcal{\hat{W}}_{kl}^{(n)}(\mathcal{L}_{i}),
\end{equation*}
where $\mathcal{W}_{kl}$ is the set of $k$-$l$ walks in the original network $\mathbf{G}$.

Furthermore, we can uniquely decompose $\mathcal{\hat{W}}_{k}^{(n+1)}(\mathcal{L}_{i})$ as:
\begin{equation*}
\mathcal{\hat{W}}_{k}^{(n+1)}(\mathcal{L}_{i})=\bigcup_{l\in T}\{\mathcal{\hat{W}}_{ki}(\lnot \mathcal{L}_{i})\odot (i,l)\odot \mathcal{\hat{W}}_{l}^{(n)}(\mathcal{L}_{i})\cup \mathcal{\hat{W}}_{kl}(\lnot \mathcal{L}_{i})\odot (l,i)\odot \mathcal{\hat{W}}_{i}^{(n)}(\mathcal{L}_{i})\}.
\end{equation*}

\noindent That is, each walk in $\mathcal{\hat{W}}_{k}^{(n+1)}(\mathcal{L}_{i})$ can be decomposed as the concatenation of a walk that first passes through a link in $\mathcal{L}_{i}$ and a walk that passes through $\mathcal{L}_{i}$ exactly $n$ times. Since every concatenation is unique and all unions are disjoint, taking the generating function on both sides yields:
\begin{equation*}
M[[\mathcal{\hat{W}}_{k}^{(n+1)}(\mathcal{L}_{i});x]]=x\cdot \sum_{l\in T}(m_{ki}\cdot M[[\mathcal{\hat{W}}_{l}^{(n)}(\mathcal{L}_{i});x]]+m_{kl}M[[\mathcal{\hat{W}}_{i}^{(n)}(\mathcal{L}_{i});x]])
\end{equation*}
for any $k$. Similarly:
\begin{equation*}
M[[\mathcal{\mathring{W}}_{k}^{(n+1)}(\mathcal{L}_{j});x]]=x\cdot \sum_{l\in T}(m_{kj}\cdot M[[\mathcal{\mathring{W}}_{l}^{(n)}(\mathcal{L}_{j});x]]+m_{kl}M[[\mathcal{\mathring{W}}_{j}^{(n)}(\mathcal{L}_{j});x]]).
\end{equation*}

By comparing these two equalities, we can establish an inductive proof: the dominance relationship holds for $n+1$ whenever it holds for $n$, given the conditions $m_{ii}\gtrsim m_{jj}$ and $m_{il}\gtrsim m_{jl}$ for all $l \notin \{i,j\}$. This approach, leveraging the generating functions of uniquely decomposed walk sets $\mathcal{\hat{W}}_{k}^{(n+1)}(\mathcal{L}_{i})$ and $\mathcal{\mathring{W}}_{k}^{(n+1)}(\mathcal{L}_{j})$, yields an elegant and concise proof of the proposition.

\section{Conclusion}

This paper develops a walk-counting framework based on generating functions to analyze how specific node and link sets influence walks between network nodes. At the core of our approach is the first passage decomposition, which establishes that walks from node $i$ to node $j$ passing through a given set can be uniquely decomposed into two fundamental components: walks from $i$ that first reach the set (without previously encountering it), concatenated with subsequent walks from the set to the target node $j$. This decomposition reveals a system of equations  that captures the relationships between walks under various passing or avoiding constraints in terms of their associated generating functions. By strategically solving this  system, with different parameters treated as known or unknown, we derive analytical results  with broad applicability.

We demonstrate the versatility of our framework through three primary applications. First, we analyze structural interventions—including node removal and link modifications—on walks between node pairs, deriving closed-form expressions that quantify intervention impacts throughout the network. Second, we extend \cite{Bramoulle2018}'s target centrality concept to scenarios with multiple senders and receivers. In addition, we provide precise measures of expected information transmission frequencies under various non-retransmission constraints. Finally, we employ first passage decomposition to prove a result about optimal link construction. When node $i$'s neighborhood contains node $j$'s neighborhood, establishing connections between any external node set and node $i$ generates more walks of any arbitrary length throughout the network than forming equivalent connections to node $j$. Throughout these applications, our approach transforms complex network analysis problems into tractable algebraic manipulations with clear walk-counting interpretations.

\newpage
\appendix
\section{Appendix}
\begin{proof}[Proof of Lemma \ref{pro:walk decomposition}] 

(i) First, we establish that $\mathcal{W}_{ij}(A) \supseteq \bigcup_{l\in A}\{\mathcal{W}_{il}(\lnot A)\backslash(i) \odot \mathcal{W}_{lj}\backslash(j)\}$. 

For any $l\in A$, consider walks $w_{il}=(i,i_1,...,i_{K-1},l) \in \mathcal{W}_{il}(\lnot A)\backslash(i)$ and $w_{lj}=(l,j_1,...,j_{K'-1},j) \in \mathcal{W}_{lj}\backslash(j)$. Their concatenation
\begin{equation*}
w_{il}\odot w_{lj}=(i,i_1,...,i_{K-1},l,j_1,...,j_{K'-1},j)
\end{equation*}
belongs to $\mathcal{W}_{ij}(A)$ since it passes through $l\in A$.

For the reverse inclusion, consider any walk $w=(i,i_1,...,i_{K-1},j) \in \mathcal{W}_{ij}(A)$. By definition, $w$ must pass through at least one node in $A$. Let $i_{k^*}$ be the first node in $w$ belonging to $A$, and denote this node as $l = i_{k^*}$. Note that $k^* \geq 1$ since $w \in \mathcal{W}_{ij}(A)$. The walk $w$ can be decomposed as:
\begin{equation*}
w=(i,i_1,...,i_{k^*-1},l) \odot (l,i_{k^*+1},...,i_{K-1},j)
\end{equation*}
where $(i,i_1,...,i_{k^*-1},l) \in \mathcal{W}_{il}(\lnot A)\backslash(i)$ since no node before $l$ belongs to $A$, and $(l,i_{k^*+1},...,i_{K-1},j)\in \mathcal{W}_{lj}\backslash(j)$.

To establish uniqueness of this decomposition, note that the concatenation contains no empty walks since the ending point of walks in $\mathcal{W}_{il}(\lnot A)\backslash(i)$ and the starting point of walks in $\mathcal{W}_{lj}\backslash(j)$ are identical (node $l$). We need to show that for any $l\in A$, if $w_{il}\odot w_{lj}=w_{il}^{\prime}\odot w_{lj}^{\prime}$ where $w_{il},w_{il}^{\prime}\in \mathcal{W}_{il}(\lnot A)\backslash(i)$ and $w_{lj},w_{lj}^{\prime}\in \mathcal{W}_{lj}\backslash(j)$, then $w_{il}=w_{il}^{\prime}$ and $w_{lj}=w_{lj}^{\prime}$.

Suppose, for contradiction, that $\#(w_{il}) > \#(w_{il}^{\prime})$. Then, $w_{il}=w_{il}^{\prime}\odot w_{ll}$ for some $w_{ll}\in \mathcal{W}_{ll}\backslash(l)$. This implies that $w_{il}$ contains $l$ as an intermediate node, which contradicts $w_{il}\in \mathcal{W}_{il}(\lnot A)\backslash(i)$ since $l \in A$. Moreover, if $\#(w_{il})=\#(w_{il}^{\prime})$, then $w_{il}=w_{il}^{\prime}$ and consequently $w_{lj}=w_{lj}^{\prime}$ since $w_{il}\odot w_{lj}=w_{il}^{\prime}\odot w_{lj}^{\prime}$.

(ii) For links, we first demonstrate that $\mathcal{W}_{ij}(\mathcal{L}) \subseteq \bigcup_{(l,k)\in \mathcal{L}}\{\mathcal{W}_{il}(\lnot \mathcal{L})\odot (l,k)\odot \mathcal{W}_{kj}\}$. 

Note that there are no length-zero walks in either set. For length-one walks, if $(i,j) \in \mathcal{W}_{ij}(\mathcal{L})$, then $(i,j) \in \mathcal{L}$, and
\begin{equation*}
(i,j) = (i) \odot (i,j) \odot (j) \in \mathcal{W}_{ii}(\lnot \mathcal{L}) \odot (i,j) \odot \mathcal{W}_{jj}
\end{equation*}

For walks of length $t\geq 2$, let $w=(i,i_1,...,i_{t-1},j)\in \mathcal{W}_{ij}(\mathcal{L})$. Since $w \in \mathcal{W}_{ij}(\mathcal{L})$, it must contain at least one link in $\mathcal{L}$. Let $(i_l,i_{l+1})$ be the first such link in $w$, where $l$ is the minimum index such that $(i_l,i_{l+1}) \in \mathcal{L}$. We use the convention that $i_0 = i$ and $i_t = j$. Then:

If $l=0$, $w = (i) \odot (i,i_1) \odot (i_1,...,i_{t-1},j) \in \mathcal{W}_{ii}(\lnot \mathcal{L}) \odot (i,i_1) \odot \mathcal{W}_{i_1j}$

If $l>0$, $w = (i,i_1,..,i_l) \odot (i_l,i_{l+1}) \odot(i_{l+1},...,j) \in \mathcal{W}_{ii_l}(\lnot \mathcal{L}) \odot (i_l,i_{l+1}) \odot \mathcal{W}_{i_{l+1}j}.$

For the reverse inclusion, length-one walks in $\bigcup_{(l,k)\in \mathcal{L}}\{\mathcal{W}_{il}(\lnot \mathcal{L}) \odot (l,k) \odot \mathcal{W}_{kj}\}$ must be in $\mathcal{L}$ and thus in $\mathcal{W}_{ij}(\mathcal{L})$. For walks of length $t\geq 2$:
\begin{equation*}
w = (i,i_1,...,i_l) \odot (i_l,i_{l+1}) \odot (i_{l+1},...,j) = (i,i_1,...,i_l,i_{l+1},...,j)
\end{equation*}
where $l\geq 0$ and $(i_l,i_{l+1}) \in \mathcal{L}$. Such a walk belongs to $\mathcal{W}_{ij}(\mathcal{L})$ by definition, since it contains the link $(i_l,i_{l+1}) \in \mathcal{L}$.

To establish uniqueness of this decomposition, note that any concatenation between a walk in $\mathcal{W}_{il}(\lnot \mathcal{L})$, the link $(l,k)$, and a walk in $\mathcal{W}_{kj}$ contains no empty components. Consider a walk $w = (i,i_1,...,j)$ and let $t^* = \min\{t: (i_t,i_{t+1}) \in \mathcal{L}\}$, using the convention $i_0 = i$. The index $t^*$ is unique by definition.

If $w = w_{il} \odot (l,k) \odot w_{kj} = w_{il}' \odot (l',k') \odot w_{k'j}'$ where $w_{il}, w_{il}' \in \mathcal{W}_{il}(\lnot \mathcal{L})$, $(l,k), (l',k') \in \mathcal{L}$, and $w_{kj}, w_{k'j}' \in \mathcal{W}_{kj}$, then $(i_{t^*}, i_{t^*+1}) = (l,k) = (l',k')$. Consequently, $(i,i_1,...,i_{t^*}) = w_{il} = w_{il}'$ and $(i_{t^*+1},...,j) = w_{kj} = w_{k'j}'$. The uniqueness of the decomposition follows from the uniqueness of $t^*$.
\end{proof}

\bigskip

\begin{proof}[Proof of Theorem \ref{pro:GF}]
(i) Since $\mathcal{W}_{ij}(\lnot A) \cap \mathcal{W}_{ij}(A) = \emptyset$ for any $i,j\in N$ and $A\subseteq N$, equation (\ref{eq:GF1}) follows from $\mathcal{W}_{ij} = \mathcal{W}_{ij}(\lnot A)\cup \mathcal{W}_{ij}(A)$ and the sum property of generating function.

By Lemma \ref{pro:walk decomposition},
\begin{equation*}
M[[\mathcal{W}_{ij}(A)]] = M[[\bigcup_{l\in A}\{\mathcal{W}_{il}(\lnot A)\backslash(i) \odot \mathcal{W}_{lj}\backslash(j)\}]].
\end{equation*}
For distinct nodes $l,l'\in A$, the sets $\{\mathcal{W}_{il}(\lnot A)\backslash(i)\odot \mathcal{W}_{lj}\backslash(j)\}$ and $\{\mathcal{W}_{il'}(\lnot A)\backslash(i)\odot \mathcal{W}_{l'j}\backslash(j)\}$ are disjoint since walks in $\mathcal{W}_{il}(\lnot A)$ and $\mathcal{W}_{il'}(\lnot A)$ cannot hit any node in $A$ except their end nodes. Therefore, by the properties of generating function,
\begin{equation*}
M[[\bigcup_{l\in A}\{\mathcal{W}_{il}(\lnot A)\backslash(i) \odot \mathcal{W}_{lj}\backslash(j)\}]] = \sum_{l\in A}M[[\{\mathcal{W}_{il}(\lnot A)\backslash(i) \odot \mathcal{W}_{lj}\backslash(j)\}]]
\end{equation*}

For any node $l\in A$, the decomposition $\mathcal{W'}=\mathcal{W}_{il}(\lnot A)\backslash(i)\odot \mathcal{W}_{lj}\backslash(j)$ is unique. Thus, by the properties of generating function,
\begin{equation*}
M[[\{\mathcal{W}_{il}(\lnot A)\backslash(i) \odot \mathcal{W}_{lj}\backslash(j)\}]] = M[[\mathcal{W}_{il}(\lnot A)\backslash(i)]]\cdot M[[\mathcal{W}_{lj}\backslash(j)]]
\end{equation*}

Note that the zero-length walk $(i)\in \mathcal{W}_{il}(\lnot A)$ when $l=i$ and $(i)\in \mathcal{W}_{lj}$ when $l=j$. Hence,
\begin{align*}
M[[\mathcal{W}_{il}(\lnot A)\backslash(i)]] &= M[[\mathcal{W}_{il}(\lnot A)]] - \mathbf{1}_{l=i}\\
M[[\mathcal{W}_{lj}\backslash(j)]] &= M[[\mathcal{W}_{lj}]] - \mathbf{1}_{l=j}
\end{align*}
Equation (\ref{eq:GF2}) follows from combining these equalities.

(ii) Equation (\ref{eq:GF3}) follows directly from applying generating function to equation (\ref{eq:partition2}).

By Lemma \ref{pro:walk decomposition},
\begin{align*}
M[[\mathcal{W}_{ij}(\mathcal{L})]] &= M[[\bigcup_{(l,k)\in \mathcal{L}}\{\mathcal{W}_{il}(\lnot \mathcal{L})\odot(l,k)\odot \mathcal{W}_{kj}\}]]\\
&= \sum_{(l,k)\in \mathcal{L}}M[[\mathcal{W}_{il}(\lnot \mathcal{L})\odot(l,k)\odot \mathcal{W}_{kj}]].
\end{align*}
The second equality holds since $\{\mathcal{W}_{il}(\lnot \mathcal{L})\odot(l,k)\} \cap \{\mathcal{W}_{il'}(\lnot \mathcal{L})\odot(l',k')\} = \emptyset$ for distinct $(l,k),(l',k')\in \mathcal{L}$.

For any walks $w_{il}\in \mathcal{W}_{il}(\lnot \mathcal{L})$ and $w_{kj}\in \mathcal{W}_{kj}$, $w_{il}\odot(l,k)\odot w_{kj}\neq 0$. Therefore,
\begin{align*}
\sum_{(l,k)\in \mathcal{L}}M[[\mathcal{W}_{il}(\lnot \mathcal{L})\odot(l,k)\odot \mathcal{W}_{kj}]] 
&= \sum_{(l,k)\in \mathcal{L}}M[[\mathcal{W}_{il}(\lnot \mathcal{L})]]\cdot M[[(l,k)]]\cdot M[[\mathcal{W}_{kj}]]\\
&= \sum_{(l,k)\in \mathcal{L}}M[[\mathcal{W}_{il}(\lnot \mathcal{L})]]\cdot M[[\mathcal{W}_{kj}]]\cdot x
\end{align*}
where the last equality uses the fact that $(l,k)$ has length 1.
\end{proof}

\bigskip
\begin{proof}[Proof of Proposition \ref{cor:nodes}]
Suppose $i \in A$ and $j \notin A$. Rewriting equation \eqref{eq:GF2} in matrix form, we obtain
\begin{equation*}
\mathbf{M}_{AA^C}^A = (\mathbf{M}_{AA}^{\lnot A} - \mathbf{I})\mathbf{M}_{AA^C} = (\mathbf{I} - (\mathbf{M}_{AA})^{-1})\mathbf{M}_{AA^C}
\end{equation*}
Since $\mathbf{M}_{AA^C}^A = \mathbf{M}_{AA^C} - \mathbf{M}_{AA^C}^{\lnot A}$, we have
\begin{align*}
\mathbf{M}_{AA^C} - \mathbf{M}_{AA^C}^{\lnot A} &= (\mathbf{I} - (\mathbf{M}_{AA})^{-1})\mathbf{M}_{AA^C} \\
\Rightarrow \mathbf{M}_{AA^C}^{\lnot A} &= (\mathbf{M}_{AA})^{-1}\mathbf{M}_{AA^C}
\end{align*}
When $i \notin A$ and $j \in A$, a similar argument yields $\mathbf{M}_{A^CA}^{\lnot A} = \mathbf{M}_{A^CA}(\mathbf{M}_{AA})^{-1}$.

For $i \notin A$ and $j \notin A$, the matrix form of equation \eqref{eq:GF2} is 
\begin{equation*}
\mathbf{M}_{A^CA^C}^A = \mathbf{M}_{A^CA}^{\lnot A}\mathbf{M}_{AA^C}
\end{equation*}
Substituting $\mathbf{M}_{A^CA^C}^A = \mathbf{M}_{A^CA^C} - \mathbf{M}_{A^CA^C}^{\lnot A}$ and our result on $\mathbf{M}_{A^CA}^{\lnot A}$, we get
\begin{align*}
\mathbf{M}_{A^CA^C} - \mathbf{M}_{A^CA^C}^{\lnot A} &= \mathbf{M}_{A^CA}^{\lnot A}\mathbf{M}_{AA^C} \\
\Rightarrow \mathbf{M}_{A^CA^C}^{\lnot A} &= \mathbf{M}_{A^CA^C} - \mathbf{M}_{A^CA}^{\lnot A}\mathbf{M}_{AA^C} \\
\Rightarrow \mathbf{M}_{A^CA^C}^{\lnot A} &= \mathbf{M}_{A^CA^C} - \mathbf{M}_{A^CA}(\mathbf{M}_{AA})^{-1}\mathbf{M}_{AA^C}
\end{align*}

To show equation (\ref{eq:nodes2}), note that 
\begin{eqnarray*}
&&\sum_{i,j\in N}m_{ij}-\sum_{i,j\in A^{C}}m_{ij}^{\lnot A} =\mathbf{1}%
^{\prime }\mathbf{M}_{A^{C}A}(\mathbf{M}_{AA})^{-1}\mathbf{M}_{AA^{C}}%
\mathbf{1}+\mathbf{1}^{\prime }\mathbf{M}_{A^{C}A}\mathbf{1}+\mathbf{1}%
^{\prime }\mathbf{M}_{AA}\mathbf{1}+\mathbf{1}^{\prime }\mathbf{M}_{AA^{C}}%
\mathbf{1} \\
&=&\mathbf{1}^{\prime }\mathbf{M}_{A^{C}A}(\mathbf{M}_{AA})^{-1}\mathbf{M}%
_{AA^{C}}\mathbf{1+1}^{\prime }\mathbf{M}_{A^{C}A}(\mathbf{M}_{AA})^{-1}%
\mathbf{M}_{AA}\mathbf{1}\\&&+\mathbf{1}^{\prime }\mathbf{M}_{AA}(\mathbf{M}_{AA})^{-1}%
\mathbf{M}_{AA}\mathbf{1+1}^{\prime }\mathbf{M}_{AA}(\mathbf{M}_{AA})^{-1}%
\mathbf{M}_{AA^{C}}\mathbf{1} \\
&=&\mathbf{1}^{\prime }\mathbf{M}_{A^{C}A}(\mathbf{M}_{AA})^{-1}\left( 
\mathbf{M}_{AA^{C}}\mathbf{1+M}_{AA}\mathbf{1}\right) +\mathbf{1}^{\prime }%
\mathbf{M}_{AA}(\mathbf{M}_{AA})^{-1}\left( \mathbf{M}_{AA}\mathbf{1+M}%
_{AA^{C}}\mathbf{1}\right)  \\
&=&\mathbf{1}^{\prime }\mathbf{M}_{A^{C}A}(\mathbf{M}_{AA})^{-1}\mathbf{\dot{%
b}}_{A}+\mathbf{1}^{\prime }\mathbf{M}_{AA}(\mathbf{M}_{AA})^{-1}\mathbf{%
\dot{b}}_{A} \\
&=&\left( \mathbf{1}^{\prime }\mathbf{M}_{A^{C}A}+\mathbf{1}^{\prime }%
\mathbf{M}_{AA}\right) (\mathbf{M}_{AA})^{-1}\mathbf{\dot{b}}_{A} \\
&=&\mathbf{\mathring{b}}_{A}^{\prime }(\mathbf{M}_{AA})^{-1}\mathbf{\dot{b}}%
_{A}\text{.}
\end{eqnarray*}
\end{proof}

\bigskip
\begin{proof}[Proof of Corollary \ref{cor:link}]
We begin by expressing equation \eqref{eq:GF4} in matrix form. For a set of links $\mathcal{L}$, we have
\begin{equation*}
\mathbf{M}^{\mathcal{L}} = x\mathbf{M}^{\lnot \mathcal{L}}\mathbf{L}\mathbf{M}
\end{equation*}
where $\mathbf{L}$ is the adjacency matrix of the subgraph defined by links in $\mathcal{L}$, with elements $L_{ij} = 1$ if $(i,j) \in \mathcal{L}$ and $0$ otherwise.

Since $\mathbf{M}^{\mathcal{L}} = \mathbf{M} - \mathbf{M}^{\lnot \mathcal{L}}$, we can get
$\mathbf{M}^{\lnot \mathcal{L}} = \mathbf{M}(\mathbf{I} + x\mathbf{LM})^{-1}$ and $
\mathbf{M}^{\mathcal{L}} = x\mathbf{M}(\mathbf{I} + x\mathbf{LM})^{-1}\mathbf{LM}$.

Therefore, for the walks that avoid link set $\mathcal{L}_-$,
\begin{equation*}
\mathbf{M}(\mathcal{L}_-, \emptyset) = \mathbf{M}(\mathbf{I} + x\mathbf{L}_-\mathbf{M})^{-1}
\end{equation*}

Next, we determine the generating function matrix for walks that avoid $\mathcal{L}_-$ but must pass through at least one link in $\mathcal{L}_+$:
\begin{equation*}
\mathbf{M}(\mathcal{L}_-, \mathcal{L}_+) = (\mathbf{I} - x\mathbf{M}(\mathcal{L}_-, \emptyset)\mathbf{L}_+)^{-1}\mathbf{M}(\mathcal{L}_-, \emptyset)
\end{equation*}

Substituting the expression of $\mathbf{M}(\mathcal{L}_-, \emptyset)$ in, we get
\begin{eqnarray*}
\mathbf{M}\left( \mathcal{L}_{-},\mathcal{L}_{+}\right)  &=&\left( \mathbf{I}%
-x\mathbf{M}\left( \mathbf{I}+x\mathbf{L}_{-}\mathbf{M}\right) ^{-1}\mathbf{L%
}_{+}\right) ^{-1}\mathbf{M}\left( \mathbf{I}+x\mathbf{L}_{-}\mathbf{M}%
\right) ^{-1} \\
&=&\left( \mathbf{M}^{-1}-x\left( \mathbf{L}_{+}-\mathbf{L}_{-}\right)
\right) ^{-1} \\
&=&\mathbf{M}+x\mathbf{M}\left( \mathbf{L}_{+}-\mathbf{L}_{-}\right) \left( 
\mathbf{I}-x\left( \mathbf{L}_{+}-\mathbf{L}_{-}\right) \mathbf{M}\right)
^{-1}\mathbf{M}\text{.}
\end{eqnarray*}
\end{proof}

\bigskip
\begin{proof}[Proof of Proposition \ref{pro:target}]
The expressions for expected message reception counts follow directly from their definitions:
\begin{align*}
\xright{n}_{iA} &= \sum_{j\in A}\sum_{t=0}^{\infty}\delta^t|\mathcal{W}_{ij}^t(\lnot\{i\})| = \sum_{j\in A}m_{ij}^{\lnot\{i\}}\\
\xleft{n}_{iA} &= \sum_{j\in A}\sum_{t=0}^{\infty}\delta^t|\mathcal{W}_{ij}^t(\lnot A)| = \sum_{j\in A}m_{ji}^{\lnot A}\\
\xboth{n}_{iA} &= \sum_{j\in A}\sum_{t=0}^{\infty}\delta^t|\mathcal{W}_{ij}^t(\lnot(A\cup\{i\}))| = \sum_{j\in A}m_{ij}^{\lnot(A\cup\{i\})}
\end{align*}

For walks within set $A$, we rewrite equation \eqref{eq:GF2} in matrix form:
\begin{equation*}
\mathbf{M}_{AA}^{A}=\left( \mathbf{M}_{AA}^{\lnot A}-\mathbf{I}\right)\left( \mathbf{M}_{AA}-\mathbf{I}\right)
\end{equation*}

From equation \eqref{eq:GF1}, we know that $\mathbf{M}_{AA}=\mathbf{M}_{AA}^{A}+\mathbf{M}_{AA}^{\lnot A}$. Combining these equations:
\begin{align*}
\mathbf{M}_{AA}-\mathbf{M}_{AA}^{\lnot A} &= \left( \mathbf{M}_{AA}^{\lnot A}-\mathbf{I}\right) \left( \mathbf{M}_{AA}-\mathbf{I}\right)\\
\Rightarrow \mathbf{M}_{AA}^{\lnot A} &= 2\mathbf{I}-\left( \mathbf{M}_{AA}\right)^{-1}
\end{align*}

From Proposition \ref{cor:nodes}, we have established that $\mathbf{M}_{AA^C}^{\lnot A} = (\mathbf{M}_{AA})^{-1}\mathbf{M}_{AA^C}$ and $\mathbf{M}_{A^CA}^{\lnot A} = \mathbf{M}_{A^CA}(\mathbf{M}_{AA})^{-1}$.

For the single-node case where $A = \{j\}$, the results follow from applying the expressions of $\mathbf{M}_{AA^C}^{\lnot A}$, $\mathbf{M}_{A^CA}^{\lnot A}$ and $\mathbf{M}_{AA}^{\lnot A}$ with appropriate substitutions of $A = \{i\}$, $\{j\}$, and $\{i,j\}$ respectively.
\end{proof}

\bigskip

\begin{proof}[Proof of Corollary \ref{corr:component-intermediary}] 
The first equality follows from the fact that 
\begin{equation*}
m_{ij}^{\lnot\{i,j,k\}} = \left(2\mathbf{I}-(\mathbf{M}^{\{i,j,k\}})^{-1}\right)_{ij}.
\end{equation*}
The second equality can be verified by substituting $m_{ij}^{\{k\}}$, $m_{ii}^{\{k\}}$, $m_{jj}^{\{k\}}$ according to $\mathbf{M}_{A^CA^C}^A = \mathbf{M}_{A^CA}(\mathbf{M}_{AA})^{-1}\mathbf{M}_{AA^C}$.
\end{proof}

\bigskip

\begin{proof}[Proof of Proposition \ref{pro:belhaj}]
In the main text, we have shown that
\begin{equation*}
M\left[\left[ \mathcal{\hat{W}}_{k}^{(n+1)}(\mathcal{L}_{i});x\right]\right] = x \cdot \sum_{l\in T}\left(m_{ki} \cdot M\left[\left[\mathcal{\hat{W}}_{l}^{(n)}(\mathcal{L}_{i});x\right]\right] + m_{kl}M\left[\left[\mathcal{\hat{W}}_{i}^{(n)}(\mathcal{L}_{i});x\right]\right]\right)
\end{equation*}
\begin{equation*}
M\left[\left[\mathcal{\mathring{W}}_{k}^{(n+1)}(\mathcal{L}_{j});x\right]\right] = x \cdot \sum_{l\in T}\left(m_{kj} \cdot M\left[\left[\mathcal{\mathring{W}}_{l}^{(n)}(\mathcal{L}_{j});x\right]\right] + m_{kl}M\left[\left[\mathcal{\mathring{W}}_{j}^{(n)}(\mathcal{L}_{j});x\right]\right]\right),
\end{equation*}
for any node $k$ and any integer $n\geq 0$. 

Therefore, for any $k\notin \{i,j\}$, we have
\begin{equation*}
M\left[\left[\mathcal{\hat{W}}_{k}^{(n+1)}(\mathcal{L}_{i});x\right]\right] \gtrsim M\left[\left[\mathcal{\mathring{W}}_{k}^{(n+1)}(\mathcal{L}_{j});x\right]\right]
\end{equation*}
whenever $m_{ki}\gtrsim m_{kj}$ and the following inductive conditions hold:
\begin{equation*}
M\left[\left[\mathcal{\hat{W}}_{l}^{(n)}(\mathcal{L}_{i});x\right]\right] \gtrsim M\left[\left[\mathcal{\mathring{W}}_{l}^{(n)}(\mathcal{L}_{j});x\right]\right]
\end{equation*}
and
\begin{equation*}
M\left[\left[\mathcal{\hat{W}}_{i}^{(n)}(\mathcal{L}_{i});x\right]\right] \gtrsim M\left[\left[\mathcal{\mathring{W}}_{j}^{(n)}(\mathcal{L}_{j});x\right]\right].
\end{equation*}

Moreover, we have
\begin{equation*}
M\left[\left[\mathcal{\hat{W}}_{i}^{(n+1)}(\mathcal{L}_{i});x\right]\right] + M\left[\left[\mathcal{\hat{W}}_{j}^{(n+1)}(\mathcal{L}_{i});x\right]\right] \geq M\left[\left[\mathcal{\mathring{W}}_{i}^{(n+1)}(\mathcal{L}_{j});x\right]\right] + M\left[\left[\mathcal{\mathring{W}}_{j}^{(n+1)}(\mathcal{L}_{j});x\right]\right]
\end{equation*}
whenever $m_{ii}\gtrsim m_{ij}$ and
\begin{equation*}
M\left[\left[\mathcal{\hat{W}}_{i}^{(n)}(\mathcal{L}_{i});x\right]\right] + M\left[\left[\mathcal{\hat{W}}_{j}^{(n)}(\mathcal{L}_{i});x\right]\right] \geq M\left[\left[\mathcal{\mathring{W}}_{i}^{(n)}(\mathcal{L}_{j});x\right]\right] + M\left[\left[\mathcal{\mathring{W}}_{j}^{(n)}(\mathcal{L}_{j});x\right]\right].
\end{equation*}
\end{proof}

 \newpage

\bibliographystyle{chicagoa}
\bibliography{gen_fn}

\end{document}